\documentclass[reqno]{amsart}

\usepackage{geometry,amsmath,amsfonts,amssymb,amsthm,amscd,mathrsfs,graphicx,enumerate,multicol,wrapfig,subfigure,hyperref, slashed}
\usepackage[all]{xy}
\usepackage{ upgreek }
\usepackage[toc]{appendix}

%


\usepackage[T1]{fontenc}


\numberwithin{equation}{subsection} 


\newcommand{\ra}{\rightarrow}

\newcommand{\lra}{\longrightarrow}
\newcommand{\8}{\infty}
\newcommand{\p}{\prime}
\newcommand{\pt}{\partial}


\newcommand{\e}{\epsilon}
\newcommand{\al}{\alpha}
\newcommand{\Om}{\Omega}

\newcommand{\gam}{\gamma}
\newcommand{\Gam}{\Gamma}

\newcommand{\vp}{\varphi}
\newcommand{\lam}{\lambda}

\newcommand{\q}{\theta}

\newcommand{\dt}{\delta}


\newcommand{\Zbb}{\mathbb{Z}}
\newcommand{\Pbb}{\mathbb{P}}
\newcommand{\Cbb}{\mathbb{C}}


\theoremstyle{plain} 

\newtheorem{THM}{Theorem}[section]
\newtheorem{DEF}[THM]{Definition}

\newtheorem{PROP}[THM]{Proposition}
\newtheorem{LEM}[THM]{Lemma}
\newtheorem{COR}[THM]{Corollary}
\newtheorem{REM}[THM]{Remark}


\newcommand{\drm}{\mathrm{d}}
\newcommand{\bt}{\bullet}

\newcommand{\Vc}{\mathcal{V}}
\newcommand{\Uc}{\mathcal{U}}
\newcommand{\Wc}{\mathcal{W}}
\newcommand{\Oc}{\mathcal{O}}
\newcommand{\red}{\mathrm{red}}

\newcommand{\Jc}{\mathcal{J}}

\newcommand{\Cc}{\mathcal{C}}
\newcommand{\Gc}{\mathcal{G}}

\newcommand{\Dc}{\mathcal{D}}

\newcommand{\Fc}{\mathcal{F}}
\newcommand{\Lc}{\mathcal{L}}

\newcommand{\Xfr}{\mathfrak{X}}
\newcommand{\Ec}{\mathcal{E}}

\newcommand{\Ufr}{\mathfrak{U}}
\newcommand{\Ber}{\mathrm{Ber}}

\newcommand{\Mcl}{\mathcal{M}}


\title{
Superconformal structures and the Superparticle on $\Cbb\Pbb^1$
\\}
\author{\small Kowshik Bettadapura
}
\date{}


\begin{document}
\maketitle


\begin{abstract}
\noindent 
In this paper the notion of a \emph{superconformal structure} on a supermanifold is introduced in an effort to study the  superparticle sigma-model. There are, in particular, two main aspects of the sigma-model which are investigated. The first is on the relationship between the superparticle Lagrangian and the component Lagrangian; and the second is on the problem of integrating infinitesimal variations to globally-defined variations of the component Lagrangian, which leads naturally to a notion of \emph{consistency}. Throughout this paper illustrations are provided on the complex projective line. 
\end{abstract}

\tableofcontents

\section{Introduction}

One of the main themes motivating this paper is on understanding the interplay between \emph{supergeometry} and \emph{supersymmetric field theory}. Typically, in what is referred to as the \emph{component formalism} \cite{FREEDSUSY}, one formulates a supersymmetric field theory on a supermanifold and subsequently ``integrates out'' the fermionic variables, so as to obtain quantities defined on the underlying manifold. Then the techniques of differential geometry, such as methods of variational calculus, become applicable. For instance, given a Lagrangian density $\Lc$ defined on a supermanifold $\Xfr$, if the supermanifold is nice enough (in a certain sense) one can integrate $\Lc$ to get the \emph{component Lagrangian} $L$, which is then a certain density defined on the underlying manifold $M$. One may then take variational derivatives of $L$ and optimise it to obtain equations of motion\footnote{Of course, the equations of motion themselves may be derived by other means. In this paper however we are only interested here in their derivation via the component formalism.}.

The caveat, glossed over above, in order for the component formalism to yield meaningful quantities, is for the supermanifold be ``nice enough''. More precisely, it is meant that the supermanifold in question be \emph{split}. Roughly, one of the key morals of this note is that even this caveat is not strong enough, thereby illustrating some of the subtleties associated with the geometry of supermanifolds. As an instructive case study, the superparticle sigma-model on a supermanifold over $\Cbb\Pbb^1$ is investigated, and it is for this reason that the notion of a \emph{superconformal structure} is introduced. 
\\

The superparticle Lagrangian is a density defined by a certain collection of vector fields, called \emph{superconformal vector fields} (\cite{RAB, WNSRS}), which are themselves subject to a certain algebraic constraint---that they satisfy the relations of a \emph{supersymmetry algebra} (for its relation to physics see \cite{WITTSUSY}). With these vector fields we are led to a notion of \emph{maximal superconformal structure}. If a supermanifold admits such a structure, then many questions and properties  regarding the superparticle Lagrangian $\Lc$ become tractable, so a natural question is then: \emph{when does a supermanifold admit such a structure?} As will be argued, that such a structure be admitted will be quite rare (at least in the $(1|2)$-dimensional case) and indeed a large class of supermanifolds (e.g., the non-split ones) can be seen to fail to admit such structures. The set of all $(1|2)$-dimensional supermanifolds over $\Cbb\Pbb^1$ which admit a maximally superconformal structure is described in this paper. 

Regarding the tractable questions and properties about $\Lc$ alluded to above, we focus on two in particular: firstly, on the relationship between $\Lc$ and the component Lagrangian $L$. And secondly, on the notion of infinitesimal variations of $L$ and the problem of integrability thereof. It will be shown that even in the presence of a maximal superconformal structure, the relationship between $\Lc$ and $L$ can be quite complicated and perhaps even a little nebulous. As for the problem of integrability, this leads to a notion of \emph{consistency} for $L$. The motivating question behind this notion being: \emph{is it possible to ``consistently'' derive the equations of motion, via the component formalism, in the given supersymmetric field theory?} To this we answer: \emph{yes}, if the infinitesimal variations integrate to give, in a sense, a globally well-defined variation. Finally, it is observed that $L$ will in general fail to be consistent. However, it will be possible to identify a locus of \emph{consistent fields for $L$}. In applications on $\Cbb\Pbb^1$ we endeavour to illustrate the failure for $L$ to be consistent and subsequently describe this locus. 
\\

This paper is organised as follows. In Section 2 some preliminary theory on supermanifolds is provided and the notion of a (maximal) superconformal structure is introduced. The conditions for a supermanifold to admit such a structure is identified and 
illustrations are given on $\Cbb\Pbb^1$. The relationship between the superparticle Lagrangian $\Lc$ and the component Lagrangian $L$ is elaborated on, and the notion of consistency is introduced in Section 3. Finally, in Section 4, the considerations of Section 3 are applied to $\Cbb\Pbb^1$, where the locus of consistent fields for the component Lagrangian $L$ is described.

\subsection*{Acknowledgements} This research was undertaken at the Mathematical Sciences Institute at the Australian National University and supported by an Australian Postgraduate Award. I would like to thank Peter Bouwknegt for suggesting the notion of consistency and for subsequent helpful discussions and useful comments made on this paper.

\section{Preliminary Theory}
\label{rjncrnckmrkcr}

In this section we provide some background theory on supermanifolds and introduce the notion of a superconformal structure. We conclude with illustrations on $\Cbb\Pbb^1$.

\subsection{Supermanifolds}
The definition of a supermanifold may be quite succinctly given in the framework of algebraic geometry. We refer to \cite{BZ, QFASDM} where this point of view is emphasised. 	

\begin{DEF}
\emph{
A $(p|q)$-dimensional real (resp. complex) supermanifold $\Xfr$ is defined as a locally ringed space $(M, \Oc_M)$, where $M$ is a $p$-dimensional, real (resp. complex) manifold and the structure sheaf $\Oc_M$ is locally isomorphic to the sheaf of rings $\Cc_M\otimes \wedge^\bt(V^q)$, where $V$ is a real (resp. complex) vector space and $\Cc_M$ is the ring of functions on $M$.
}
\end{DEF}

For $\Xfr$ of dimension $(p|q)$, we say it has \emph{even} dimension $p$ and \emph{odd} dimension $q$. For the purposes of this paper it will be convenient to think about a supermanifold $\Xfr$ as being modelled on two bits of data: a manifold $M$, called the \emph{reduced space}, and a vector bundle $E\ra M$. Then the structure sheaf $\Oc_M$ of $\Xfr = (M, \Oc_M)$ will be locally isomorphic to the sheaf of sections of the bundle of exterior algebras $\wedge^\bt \Ec$, where $\Ec= \Gam(M, E)$. In this paper we will consider $\Xfr_{(M, E)}$, for $E$ is a holomorphic vector bundle, a \emph{complex supermanifold}. This is in contrast to those studied in \cite{WNS} where only the underlying manifold $M$ is assumed to admit any complex structure.  

It is useful to note that there exists a canonical inclusion $M\hookrightarrow \Xfr$ as locally ringed spaces, corresponding to the ``body-map'' $\e^*: \Oc_M \ra \Oc_M/\Jc$, where $\Jc\subset \Oc_M$ denotes the subsheaf generated by nilpotent elements. Now from any given pair $(M, E)$ we may construct a supermanifold by simply taking the structure sheaf $\Oc_M$ to be $\wedge^\bt \Ec$. This gives what is termed the \emph{split model} (for instance, in \cite{WD}) and is denoted $\Pi E$. As a locally ringed space $\Pi E =  (M, \wedge^\bt\Ec)$. A supermanifold $\Xfr = (M, \Oc_M)$ equipped with a choice of local isomorphisms $\Oc_M \cong_{\mathrm{loc}}\wedge^\bt \Ec$ is then said to be \emph{modelled} on $(M, E)$. If we wish to only specify the underlying manifold $M$, then we will say $\Xfr$ is a supermanifold \emph{over $M$}. As shorthand we write $\Xfr_{(M, E)}$ to mean $\Xfr$ is a supermanifold modelled on the manifold $M$ and the vector bundle $E\ra M$. That every supermanifold may be written in this way for some $(M, E)$, up to isomorphism, is described in \cite[Proposition 2, p. 588]{GR}.

\begin{DEF}\label{rjcnrncjkrccr}
\emph{
The supermanifold $\Xfr_{(M, E)} = (M, \Oc_M)$ is said to be \emph{split} if the sheaves $\Oc_M$ and $\wedge^\bt\Ec$ are isomorphic as $\Cc_M$-modules.
}
\end{DEF}

The supermanifold $\Xfr$ is said to be \emph{non-split} if it is not split. The above definition is meaningful only in the holomorphic category. Indeed, in the smooth category we have the following structure theorem for supermanifolds, first proved by Batchelor \cite{BAT}.

\begin{THM}
As a smooth supermanifold, $\Xfr_{(M, E)}$ is split.\qed
\end{THM}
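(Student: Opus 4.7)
The plan is to exploit the existence of smooth partitions of unity, which is precisely the feature distinguishing the smooth category from the holomorphic one. I would first extract from $\Xfr_{(M,E)} = (M,\Oc_M)$ the canonical filtration coming from the nilpotent ideal $\Jc \subset \Oc_M$. Since the odd dimension is $q$, one has $\Jc^{q+1} = 0$, and $\Jc^k/\Jc^{k+1}$ is a locally free $\Cc_M$-module; one identifies $\Jc/\Jc^2 \cong \Ec$ as $\Cc_M$-modules (this identification is built into the modelling data). The split model $\Pi E = (M, \wedge^\bt \Ec)$ carries exactly the same associated graded. The goal is to produce a sheaf isomorphism $\wedge^\bt \Ec \to \Oc_M$ of sheaves of $\Zbb_2$-graded $\Cc_M$-algebras.

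Next, I would cover $M$ by open sets $\{U_i\}$ on which local isomorphisms $\vp_i : \wedge^\bt \Ec|_{U_i} \to \Oc_M|_{U_i}$ are chosen (these exist by the definition of $\Xfr_{(M,E)}$). Passing to overlaps, the cocycle $\vp_{ij} := \vp_i^{-1} \circ \vp_j$ is a sheaf automorphism of $\wedge^\bt\Ec|_{U_{ij}}$ reducing to the identity on $\Cc_M$ and on $\Ec$. Such automorphisms form a sheaf of groups $\Aut^+(\wedge^\bt \Ec)$ with a filtration by subgroups $G_k$ of automorphisms that are the identity modulo $\Jc^{k+1}$; the quotients $G_k/G_{k+1}$ are abelian sheaves isomorphic to $\mathcal{H}om_{\Cc_M}(\Ec, \wedge^{k+1}\Ec) \otimes \mathbb{Z}_2$-graded summands. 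The splitness of $\Xfr$ is equivalent to the vanishing of the cohomology class $[\vp_{ij}] \in H^1(M, \Aut^+(\wedge^\bt\Ec))$, which by a step-by-step devissage along the filtration reduces to the vanishing of classes in $H^1(M, G_k/G_{k+1})$ for $k \geq 1$.

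The heart of the argument, and the step which genuinely uses smoothness, is then to observe that each $G_k/G_{k+1}$ is a sheaf of $\Cc_M$-modules, hence a \emph{fine} sheaf (it admits multiplication by a smooth partition of unity subordinate to $\{U_i\}$). Fine sheaves have vanishing higher Cech cohomology, so each obstruction class is trivial. Inductively killing the obstructions, one modifies the cocycle $\vp_{ij}$ by coboundaries until it becomes the identity, at which point the local isomorphisms $\vp_i$ patch to a global isomorphism $\wedge^\bt \Ec \xrightarrow{\sim} \Oc_M$.

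The main obstacle, in my view, is the bookkeeping around the non-abelian nature of $\Aut^+(\wedge^\bt \Ec)$: one cannot directly form $H^1$ of a sheaf of non-abelian groups and invoke fineness, so the argument must genuinely proceed by successive approximation along the filtration, showing at each stage that a representative cocycle in $G_k$ can be trivialised modulo $G_{k+1}$ using a partition of unity applied to a $0$-cochain in the abelian quotient. Everything else---the construction of $\Jc$, the identification of associated gradeds, and the glueing---is formal once this inductive mechanism is set up. It is precisely this step that has no holomorphic analogue, consistent with Definition \ref{rjcnrncjkrccr} being nontrivial in the complex setting.
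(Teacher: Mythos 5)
The paper does not actually prove this statement: it is Batchelor's theorem, quoted with a reference to \cite{BAT} and stamped \qed, so there is no in-house argument to compare against. Your proposal is, in essence, the standard modern proof, and its logic is sound: reduce splitness to the triviality of a class in the non-abelian $H^1$ of the sheaf of filtered automorphisms of $\wedge^\bt\Ec$, run a d\'evissage along the filtration by ``identity mod $\Jc^{k+1}$'', and kill each abelian obstruction group because it is a sheaf of $\Cc_M$-modules, hence fine, hence acyclic in the smooth category. You correctly isolate both the one genuinely smooth ingredient (partitions of unity) and the one genuinely delicate point (that one cannot naively take $H^1$ of the non-abelian sheaf, but must trivialise cocycles stage by stage). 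Batchelor's own proof is organised differently --- she builds the splitting directly by an inductive construction, splitting the exact sequences $0\to\Jc^{k+1}\to\Jc^k\to\Jc^k/\Jc^{k+1}\to 0$ of $\Cc_M$-modules using partitions of unity, rather than phrasing everything as vanishing of obstruction classes --- but the two arguments are two packagings of the same mechanism, and yours has the advantage of making transparent exactly which cohomology groups obstruct splitting in the holomorphic case (where Definition \ref{rjcnrncjkrccr} is nontrivial). One small inaccuracy worth fixing: your identification of the graded quotients $G_k/G_{k+1}$ is off. An automorphism congruent to the identity mod $\Jc^{k+1}$ has leading term a \emph{derivation} of $\wedge^\bt\Ec$ raising exterior degree by $k$, and only the \emph{even} such derivations occur, so the quotients are the degree-$k$ pieces of the even derivation sheaf (roughly $\Hom(\Ec,\wedge^{k+1}\Ec)\oplus \Tc_M\otimes\wedge^k\Ec$, nonzero only for $k$ even), not a ``$\Zbb_2$-graded tensor'' of $\Hom(\Ec,\wedge^{k+1}\Ec)$. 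This does not damage the proof, since the only property you use is that these quotients are $\Cc_M$-modules.
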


That a supermanifold be split is a highly desirable property from the perspective of a physicist interested in studying supersymmetry. This is because, in this case, one can formulate quantities of interest on the supermanifold itself, such as Lagrangians, and integrate them by a process of what is known as \emph{Berezin integration}. The Berezinian line bundle on a supermanifold plays a role analogous to the canonical line bundle on a complex manifold.  For the purposes of this paper, we need only be aware that it is sections of this line bundle on which computes a Berezin integral. Indeed, on a split supermanifold, we have the following nice characterisation, which was observed in the case of \emph{supercurves} \cite[p. 7]{BERGRAB}, but holds in more generality. 

\begin{LEM}\label{jnckncjrkmcrkc}
The sheaf of sections of the Berezinian line bundle on a split supermanifold $\Xfr_{(M, E)}$ is isomorphic, as modules over $\Cc_M$, to $\Om^{\mathrm{top}}(M)\otimes \det \Ec^\vee$. \qed
\end{LEM}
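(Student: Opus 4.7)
The plan is to verify the lemma by working in local trivialisations compatible with the splitting, computing the Berezinian cocycle directly, and matching it against the transition cocycle of $\Om^{\mathrm{top}}(M)\otimes\det\Ec^\vee$.

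I would begin by recalling that on a $(p|q)$-dimensional supermanifold with local coordinates $(x^a \mid \q^\al)$, the Berezinian sheaf $\Ber(\Xfr)$ is the rank-$(1|0)$ free $\Oc_M$-module with distinguished generator $[dx^1\cdots dx^p \mid d\q^1\cdots d\q^q]$; under a coordinate change this generator is multiplied by the Berezinian of the super-Jacobian. Since the lemma concerns only the $\Cc_M$-module structure, it suffices to determine the transition cocycle of $\Ber(\Xfr)$ on the reduced space $M$.

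Exploiting splitness, I would then fix an open covering $\{U_i\}$ of $M$ trivialising $E$ and, using Definition \ref{rjcnrncjkrccr}, pick coherent local isomorphisms $\Oc_M|_{U_i}\cong\wedge^\bt\Ec|_{U_i}$. Choosing even coordinates $x^a$ on $U_i$ and letting $\q^\al$ denote odd generators corresponding to a local frame of $E$, the splitness hypothesis forces the transitions on overlaps to be free of odd-into-even mixing: they take the form $\tilde x^a = f^a(x)$ and $\tilde\q^\al = g^\al_\be(x)\q^\be$. The super-Jacobian is therefore block-diagonal,
\begin{equation*}
J = \begin{pmatrix} \partial f/\partial x & 0 \\ 0 & g(x) \end{pmatrix},
\end{equation*}
and its Berezinian collapses to $\Ber(J) = \det(\partial f/\partial x)\cdot\det(g)^{-1}$.

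Matching cocycles is the final step. The factor $\det(\partial f/\partial x)$ is precisely the transition cocycle for $\Om^{\mathrm{top}}(M)$, while the transformation $\tilde\q^\al = g^\al_\be\q^\be$ identifies the $\q^\al$ as behaving like a local frame of $E$; hence $d\q^1\cdots d\q^q$ transforms via $\det(g)^{-1}$, which is the transition cocycle for $\det\Ec^\vee$. The assignment $[dx\mid d\q]\mapsto (dx^1\wedge\cdots\wedge dx^p)\otimes(e^1\wedge\cdots\wedge e^q)^\vee$ then patches to a global $\Cc_M$-linear isomorphism of sheaves.

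The main obstacle lies in justifying the block-diagonal form of the transitions: this is exactly the content of splitness, and without it the super-Jacobian would take the general form $\begin{pmatrix} A & B \\ C & D\end{pmatrix}$ with $B,C$ nilpotent, so that $\Ber(J) = \det(A - BD^{-1}C)\cdot\det(D)^{-1}$ would pick up contributions from the nilpotent part of $\Oc_M$ and $\Ber(\Xfr)$ would no longer be expressible purely in terms of reduced data. Once the block-diagonal reduction is in place, checking independence of the local frame choices is an immediate consequence of the cocycle calculation and poses no additional difficulty.
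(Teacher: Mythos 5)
Your argument is correct and is essentially the standard one; note that the paper itself gives no proof of this lemma, deferring instead to the observation in Bergvelt--Rabin for supercurves, so your cocycle computation is exactly the argument being implicitly invoked. One small inaccuracy: even in the split case the super-Jacobian is not block-diagonal, since $\partial\tilde\q^\al/\partial x^a = (\partial g^\al_\be/\partial x^a)\q^\be$ is generically nonzero; it is only block lower-triangular. This does not affect your conclusion, because the even-odd block $B=\partial\tilde x/\partial\q$ does vanish, and $\Ber(J)=\det(A-BD^{-1}C)\det(D)^{-1}$ still collapses to $\det(\partial f/\partial x)\cdot\det(g)^{-1}$, which is the transition cocycle of $\Om^{\mathrm{top}}(M)\otimes\det\Ec^\vee$ as you state.
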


As will be observed, the above lemma has a nice consequence for what we term \emph{maximally superconformal supermanifolds}, to be introduced and discussed in the section to come. Throughout the rest of this paper we will only be concerned with $(1|2)$-dimensional supermanifolds and will prefer to work with open covers and trivialisations. That the definitions made, and results obtained, are independent of these choices is in principle true but not explicitly stated or proved.

\subsection{Superconformal Structures} Let $\Xfr$ be a $(1|2)$-dimensional, complex supermanifold with covering $\mathfrak U = \{\Uc, \Vc, \ldots\}$ and transition functions $\rho = (\rho_{\Uc\Vc})$. We will often refer to the pair $(\Ufr, \rho)$ as a trivialisation of $\Xfr$. Here $\Uc\subset \Xfr$ but, by abuse of notation, we will also identify it with an open subset of $\Cbb^{1|2}$. Underlying the covering $\Ufr$ for $\Xfr$ is the covering $\Ufr_\red = \{U, V, \ldots\}$ of the underlying manifold $M$, and $\Uc$ can be obtained from $U$ in the sense that we have $\Uc = U\times \Cbb^{0|2}$. Following the formalism in \cite{BER}, let  $(X, \q^1, \q^2)$ denote a system of coordinates on $\Uc$ (with $x = X_\red$ being the coordinate on $U$). Then on $\Uc$ we may write down the following vector fields,
\begin{align}
D_{\Uc, 1} = \frac{\pt}{\pt \q^1} - \q^2\frac{\pt}{\pt X}
&&
\mbox{and}
&&
D_{\Uc, 2} = \frac{\pt}{\pt \q^2} - \q^1\frac{\pt}{\pt X}.
\label{jcnkrnckrnkcr}
\end{align}
If $(Y, \eta^1,\eta^2)$ denotes a system of coordinates on $\Vc$ then we similarly have the vector fields $D_{\Vc, a}$, for $a = 1, 2$. Let $\Dc_a = \{D_{\Uc, a}, D_{\Vc, a}, \ldots\}$ be a collection of such vector fields defined locally. The interest in studying such vector fields stems from the observation that their (super-)Lie bracket, taken in the tangent bundle of $\Xfr$ which is viewed as a (super-)Lie algebra, is given by
\[
\frac{1}{2}[\Dc_{\Uc, 1}, \Dc_{\Uc, 2}] = -\frac{\pt}{\pt X}.
\]  
Upon quantisation, in a certain sense, the above vector fields would be replaced by operators and the above relation is precisely that which defines the supersymmetry algebra (see \cite{FREEDSUSY, WITTSUSY}). In studying such vector fields on supermanifolds, one could argue that we are investigating geometric properties of the supersymmetry algebra. 

\begin{DEF}\emph{
A vector field of the form given in \eqref{jcnkrnckrnkcr} is termed a \emph{superconformal vector field}. 
}
\end{DEF}

\begin{REM}
\label{jdckcnkcjkrmckjr}
\emph{
We refer to \cite{CRARAB, RAB, WNSRS} and references therein for usage of the term ``superconformal'' with regards to these vector fields. 
}
\end{REM}

Note that choices of coordinate system were used to order to define the superconformal vector fields. It is natural to ask whether these locally defined vector fields patch together to yield a globally defined vector field, i.e., a vector field on the supermanifold itself. In more detail, let $\Xfr_{(M, E)}$ come equipped with a trivialisation $(\Ufr,\rho)$. That the superconformal vector fields be globally defined means there exist a 1-cocycle of matrix-valued, holomorphic functions $g(\Ufr) = (g^b_{\Uc\Vc, a})$, for $a,b = 1, 2$, such that
\begin{align}
(\rho_{\Uc\Vc})_*D_{\Uc, a} = g^b_{\Uc\Vc, a} D_{\Vc, b}.
\label{jcnknckncjkck}
\end{align}
Here $(\rho_{\Uc\Vc})_*$ is the natural pushforward map on vector fields induced by the trivialisation $\rho$, and the Einstein summation convention is being used. 

\begin{DEF}
\emph{
The supermanifold $\Xfr_{(M, E)}$ admits a \emph{superconformal structure} if there exists a trivialisation of $\Xfr_{(M, E)}$ in which the superconformal vector fields are globally defined. 
}
\end{DEF}

The cocycle $g(\Ufr) = \{(g^b_{\Uc\Vc,a})\}$ is said here to be the \emph{defining cocycle for the superconformal structure}. 
Now \eqref{jcnknckncjkck} does \emph{not} imply that $\Dc_a$, $a = 1, 2$ are, individually, globally defined vector fields on $\Xfr$. In order for this to hold we would have to require the defining cocycle $g(\Ufr)$ be diagonal, i.e., comprise of a collection of diagonal matrices. This leads us to the notion of maximality.

\begin{DEF}\label{rjcbjkrcbnjkr}
\emph{
The supermanifold $\Xfr_{(M, E)}$ admits a \emph{maximal} superconformal structure if it admits a superconformal structure whose defining cocycle is diagonal. 
}
\end{DEF}

%
%
%
%
%
%
%
%
%
%
 
%
\begin{REM}
\emph{
The notion of superconformal structure so far introduced is only applicable in the $(1|2)$-dimensional case. It is possible however to generalise this notion to supermanifolds of dimension other than $(1|2)$. In the $(1|1)$-dimensional case, the appropriate generalisation will lead to what is commonly known as a \emph{super-Riemann surface}. We refer to a forthcoming work \cite{BETT} where superconformal structures in higher-dimensional cases are studied. 
}
\end{REM}

The question of whether the given, locally defined, superconformal vector fields are globally defined can now be translated into the question of whether the supermanifold itself admits a superconformal structure. In the case of maximal superconformal structures we have the following result.

\begin{THM}\label{chncnrcnrk}
The $(1|2)$-dimensional supermanifold $\Xfr_{(M, E)}$ is maximally superconformal if and only if it is the split model $\Pi E$, where $E$ satisfies:
\begin{enumerate} [$(1)$]
	\item it splits into a direct sum of line bundles and;
	\item its determinant line bundle $\det E$ is identified with the tangent bundle $TM$.
\end{enumerate}
\end{THM}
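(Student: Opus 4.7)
The plan is to work in a trivialisation $(\mathfrak{U}, \rho)$ of $\Xfr_{(M,E)}$ and unpack the maximality condition $(\rho_{\Uc\Vc})_* D_{\Uc,a} = g_{\Uc\Vc,a}\, D_{\Vc,a}$ (no summation) locally. On an overlap $\Uc\cap\Vc$ with coordinates $(X,\theta^1,\theta^2)$ and $(Y,\eta^1,\eta^2)$ the most general holomorphic transition takes the form
\[
Y = f_0(X) + \theta^1\theta^2\, f_{12}(X), \qquad \eta^a = \psi^a_b(X)\,\theta^b,
\]
with each $g_{\Uc\Vc,a}$ an even holomorphic function on the overlap. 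My first step is to evaluate $(\rho_{\Uc\Vc})_* D_{\Uc,a}$ on the three target coordinates $Y, \eta^1, \eta^2$ via $(\rho_*D)(F) = D(\rho^*F)$, and match against $g_{\Uc\Vc,a} D_{\Vc,a}$ applied to the same coordinates, reading off coefficients of $1$, $\theta^1$, $\theta^2$ and $\theta^1\theta^2$.

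A short, direct calculation then reduces the system to the four scalar equations
\[
\psi^1_2 = 0, \qquad \psi^2_1 = 0, \qquad f_{12} = 0, \qquad f_0'(X) = \psi^1_1(X)\,\psi^2_2(X),
\]
together with an explicit expression for the $g_{\Uc\Vc,a}$ in terms of the $\psi^a_a$. The geometric reading of these equations is the content of the theorem. The vanishing of the nilpotent piece $f_{12}$ says the transition cocycle has no non-split component, so $\Xfr \cong \Pi E$; the vanishing of the off-diagonal $\psi^1_2, \psi^2_1$ shows the odd transition matrices are diagonal, forcing a holomorphic splitting $E = L_1 \oplus L_2$ with line-bundle transitions $\psi^1_1$ and $\psi^2_2$; and since $f_0'$ is the transition cocycle of $TM$ while $\psi^1_1\psi^2_2$ is that of $L_1 \otimes L_2 = \det E$, the relation $f_0' = \psi^1_1\psi^2_2$ identifies $TM$ with $\det E$.

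The converse is easier: given $E = L_1\oplus L_2$ with a chosen isomorphism $\det E \cong TM$, I would pick a trivialisation of $\Pi E$ adapted to the decomposition in which the chosen isomorphism is realised by the cocycle identity $f_0' = \psi^1_1\psi^2_2$, and run the above computation in reverse to verify that the defining cocycle so constructed is diagonal. The main obstacle is simply careful bookkeeping in the odd-variable calculus---tracking signs from swapping $\theta^1, \theta^2$, correctly pushing an odd derivation through the $\theta^1\theta^2$ term in $Y$, and matching the $(1,\theta^1,\theta^2,\theta^1\theta^2)$-components consistently on both sides. No deep topological input is required; once the local equations are in hand, the geometric content (splitness of $\Xfr$, decomposability of $E$, and $\det E \cong TM$) reads off immediately.
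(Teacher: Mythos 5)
Your proposal is correct and follows essentially the same route as the paper's proof in Appendix~\ref{ejnckenckrckrrc}: a direct coefficient-matching computation of $(\rho_{\Uc\Vc})_*D_{\Uc,a}$ in a trivialisation, yielding exactly the four scalar conditions $\zeta^1_{UV,2}=\zeta^2_{UV,1}=0$, $\al_{UV}=0$ and $f_{UV}'=\det\zeta_{UV}$ (the paper extracts the latter two by solving for $\al_{UV}$ in two ways from the $\pt/\pt y$-components of the two pushforwards), which are then read geometrically as splitness of $\Xfr$, decomposability of $E$, and $\det E\cong TM$. One small remark on the converse: what actually needs checking there is not that the cocycle is diagonal (that is automatic from the construction) but that the candidate $\{g_{\Uc\Vc,a}\}$ genuinely satisfies the pushforward relation and the cocycle condition, which the paper verifies explicitly and shows is equivalent to $\det E = TM$.
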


\begin{proof}
We defer a proof of this theorem to Appendix \ref{ejnckenckrckrrc}. 
\end{proof}

A useful corollary of part (2) in the above theorem and Lemma \ref{jnckncjrkmcrkc} is the corresponding statement about the Berezinian line bundle.

\begin{COR}\label{rcjnkrnckrnckr}
The Berezinian Line bundle of a $(1|2)$-dimensional, maximally superconformal supermanifold is trivial.\qed
\end{COR}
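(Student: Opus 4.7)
The plan is to combine Theorem~\ref{chncnrcnrk} and Lemma~\ref{jnckncjrkmcrkc} via a short tensor-product computation in $\Pic(M)$.

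First I would invoke Theorem~\ref{chncnrcnrk} to conclude that $\Xfr$ is, up to isomorphism, the split model $\Pi E$ for which $E$ is a direct sum of line bundles with $\det E \cong TM$. Splitness then places $\Xfr$ within the hypotheses of Lemma~\ref{jnckncjrkmcrkc}, which presents the Berezinian sheaf as $\Omega^{\mathrm{top}}(M) \otimes \det \Ec^\vee$.

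Next, since $M$ is one-dimensional one has $\Omega^{\mathrm{top}}(M) = T^{\ast}M$. Dualising the identification $\det E \cong TM$ supplied by Theorem~\ref{chncnrcnrk} gives $\det \Ec^\vee \cong T^{\ast}M$; viewed through the conventions of Lemma~\ref{jnckncjrkmcrkc}, the factor $\det \Ec^\vee$ is precisely inverse to $\Omega^{\mathrm{top}}(M)$ in $\Pic(M)$, so the two factors cancel and the Berezinian sheaf is isomorphic to $\Oc_M$.

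There is no substantive obstacle here: the content is absorbed entirely into Theorem~\ref{chncnrcnrk} and Lemma~\ref{jnckncjrkmcrkc}, and the corollary reduces to a one-line manipulation of line bundles on $M$. The only point meriting care is to make sure that the conventions under which odd bundles enter the super-determinant in Lemma~\ref{jnckncjrkmcrkc} are aligned with the identification of $\det E$ with $TM$ in Theorem~\ref{chncnrcnrk}, so that the cancellation literally produces the trivial line bundle rather than $K_M^{\otimes 2}$.
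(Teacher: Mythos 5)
Your overall strategy is the intended one---the paper offers no written proof beyond citing part (2) of Theorem \ref{chncnrcnrk} and Lemma \ref{jnckncjrkmcrkc}, so the corollary is indeed meant to be the one-line substitution you describe---but the operative step of your argument is internally inconsistent and, as literally written, lands on the wrong answer. You assert both that $\det\Ec^\vee\cong T^*M$ (obtained by dualising $\det E\cong TM$) and that $\det\Ec^\vee$ is \emph{inverse} to $\Om^{\mathrm{top}}(M)$ in $\Pic(M)$. Since $M$ is a curve, $\Om^{\mathrm{top}}(M)=T^*M$, so these two claims contradict each other: if $\det\Ec^\vee\cong T^*M$, then Lemma \ref{jnckncjrkmcrkc} gives $\Om^{\mathrm{top}}(M)\otimes\det\Ec^\vee\cong K_M^{\otimes 2}$, which on $\Cbb\Pbb^1$ is the non-trivial bundle $\Oc(-4)$. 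The cancellation you want requires the opposite identification, namely $\det\Ec^\vee\cong TM$, equivalently $\det\Ec\cong\Om^{\mathrm{top}}(M)$.

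The point you flag in your last paragraph is therefore not a formality to be checked but the entire content of the step, and the sentence you actually wrote resolves it the wrong way round. The resolution is that the identification in Theorem \ref{chncnrcnrk}(2) is made at the level of the cocycle $\zeta$ of the odd coordinates, via $\det\zeta_{UV}=\pt f_{UV}/\pt x$ (see \eqref{jcnknckrcnkrnckr}); unwinding this at the level of sheaves of sections, the generator $\q^1\wedge\q^2$ of $\det\Ec$ transforms with the cocycle of $\Om^1(M)$, so $\det\Ec\cong\Om^{\mathrm{top}}(M)$ and $\det\Ec^\vee\cong TM$. One can sanity-check this on $\Cbb\Pbb^{1|2}=\Pi_{(-1,-1)}$: there $\Ec=\Oc(-1)\oplus\Oc(-1)$, so $\det\Ec=\Oc(-2)=\Om^{\mathrm{top}}(\Cbb\Pbb^1)$ and $\Om^{\mathrm{top}}(M)\otimes\det\Ec^\vee\cong\Oc(-2)\otimes\Oc(2)\cong\Oc$, as the corollary requires; your reading would instead force $\Ec$ to have degree $+2$, contradicting Corollary \ref{rjcnkrnjckrncjkr}. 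With the duality corrected, your one-line argument goes through.
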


%

\subsection{Illustrations on $\Cbb\Pbb^1$}
The applications in this paper  revolve around $\Cbb\Pbb^1$, so it is therefore pertinent to include a brief discussion of superconformal structures on supermanifolds over $\Cbb\Pbb^1$. Such supermanifolds have been extensively studied by many authors and the notation used here may be attributed to that used in \cite{BONIS, VISH}.
\\

We start off here with a theorem, first proved by Grothendieck and stated and proved in \cite[p. 12]{OSS}, on the nature of holomorphic vector bundles on $\Cbb\Pbb^1$.

\begin{THM}\label{rjcnkrnckrcnkr}
The sheaf of sections $\Ec$ of a holomorphic vector bundle $E$ over $\Cbb\Pbb^1$ of rank $r$ may be written as 
\[
\Ec = \Oc_{\Cbb\Pbb^1}(k_1) \oplus \cdots \oplus \Oc_{\Cbb\Pbb^1}(k_r)
\]
for unique $k_i\in \Zbb$ up to re-ordering. \qed
\end{THM}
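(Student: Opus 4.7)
The plan is to prove this by induction on the rank $r$, using the standard Grothendieck strategy: find a line sub-bundle of maximal degree, show the quotient is locally free (and hence by induction a direct sum of line bundles), and then show the resulting short exact sequence splits via an $\mathrm{Ext}^1$ vanishing argument. The base case $r=1$ reduces to the statement $\mathrm{Pic}(\Cbb\Pbb^1) \cong \Zbb$, which follows from the exponential sequence together with $H^1(\Cbb\Pbb^1, \Oc) = 0$ and $H^2(\Cbb\Pbb^1, \Zbb) \cong \Zbb$.

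For the inductive step, first I would produce a line sub-bundle. Serre's theorem guarantees that for $n$ sufficiently large, $\Ec(n) := \Ec \otimes \Oc_{\Cbb\Pbb^1}(n)$ acquires nonzero global sections, so there is an injection $\Oc_{\Cbb\Pbb^1}(-n) \hookrightarrow \Ec$ for some $n \gg 0$. Consequently the set of integers $k$ for which $\mathrm{Hom}(\Oc_{\Cbb\Pbb^1}(k), \Ec)$ is nonzero is bounded above (one can see this by noting that any such inclusion would contradict the semistability bound on degrees of subsheaves); so we may pick $k$ maximal and fix an inclusion $\Oc_{\Cbb\Pbb^1}(k) \hookrightarrow \Ec$. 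Since $\Cbb\Pbb^1$ is a smooth curve, the quotient $\Fc := \Ec/\Oc_{\Cbb\Pbb^1}(k)$ is torsion-free and hence locally free of rank $r-1$, so by induction $\Fc \cong \bigoplus_{i=1}^{r-1} \Oc_{\Cbb\Pbb^1}(k_i)$.

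The main (and only non-formal) step is to show the extension
\[
0 \lra \Oc_{\Cbb\Pbb^1}(k) \lra \Ec \lra \Fc \lra 0
\]
splits. Twisting by $\Oc_{\Cbb\Pbb^1}(-k-1)$ and taking the long exact sequence in cohomology, the vanishings $H^0(\Oc_{\Cbb\Pbb^1}(-1)) = H^1(\Oc_{\Cbb\Pbb^1}(-1)) = 0$ force
\[
H^0(\Fc(-k-1)) \;\cong\; H^0(\Ec(-k-1)) \;=\; \mathrm{Hom}(\Oc_{\Cbb\Pbb^1}(k+1), \Ec) \;=\; 0,
\]
where the last equality is the maximality of $k$. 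Hence $k_i \leq k$ for every $i$, which gives $k - k_i \geq -1$ and therefore $H^1(\Oc_{\Cbb\Pbb^1}(k - k_i)) = 0$ for each $i$. Summing over $i$ yields
\[
\mathrm{Ext}^1(\Fc, \Oc_{\Cbb\Pbb^1}(k)) \;\cong\; H^1\bigl(\Fc^\vee \otimes \Oc_{\Cbb\Pbb^1}(k)\bigr) \;=\; 0,
\]
so the extension class vanishes and the sequence splits, completing the inductive step.

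Uniqueness up to reordering is essentially bookkeeping: the multiset $\{k_i\}$ can be recovered from $\Ec$ by computing, for each integer $m$, the jump in $\dim H^0(\Ec(-m))$ as $m$ varies, which determines the number of summands of each degree. I expect the splitting step above to be the main obstacle in that it is the only place where the geometry of $\Cbb\Pbb^1$ (the vanishing $H^1(\Oc(-1)) = 0$) enters in an essential way; the other steps are either formal or a direct appeal to the curve structure.
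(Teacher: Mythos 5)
The paper does not actually prove this statement: it is Grothendieck's splitting theorem, quoted with a \qed and a pointer to the literature ([OSS, p.~12]), so there is no in-paper argument to compare yours against. Your proposal is the standard textbook proof (maximal-degree line subsheaf, induction on rank, splitting via vanishing of $\mathrm{Ext}^1$), and its skeleton is sound: the identification $H^0(\Ec(-k-1))\cong H^0(\Fc(-k-1))$ coming from $H^0(\Oc_{\Cbb\Pbb^1}(-1))=H^1(\Oc_{\Cbb\Pbb^1}(-1))=0$, the deduction $k_i\le k$, the vanishing of $\mathrm{Ext}^1(\Fc,\Oc_{\Cbb\Pbb^1}(k))\cong\bigoplus_i H^1(\Oc_{\Cbb\Pbb^1}(k-k_i))$, and the recovery of the multiset $\{k_i\}$ from the function $m\mapsto \dim H^0(\Ec(-m))$ are all correct.

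Two of your justifications are off, one of them materially. You assert that the quotient $\Fc=\Ec/\Oc_{\Cbb\Pbb^1}(k)$ is torsion-free ``since $\Cbb\Pbb^1$ is a smooth curve''. That is not the reason: a nonzero sheaf map $\Oc_{\Cbb\Pbb^1}(k)\to\Ec$ is indeed injective, but its cokernel can perfectly well have torsion (e.g.\ $\Oc_{\Cbb\Pbb^1}(-1)\hookrightarrow\Oc_{\Cbb\Pbb^1}$ vanishing at a point has skyscraper quotient); smoothness of the curve only upgrades \emph{torsion-free} to \emph{locally free}. What actually kills the torsion is the maximality of $k$: if $\Fc$ had torsion supported on an effective divisor $D$, the saturation of $\Oc_{\Cbb\Pbb^1}(k)$ inside $\Ec$ would be a line subsheaf isomorphic to $\Oc_{\Cbb\Pbb^1}(k+\deg D)$ with $\deg D>0$, contradicting the choice of $k$. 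This needs to be said, since the same maximality is exactly what you invoke later for the splitting. Secondly, the boundedness of $\{k:\Hom(\Oc_{\Cbb\Pbb^1}(k),\Ec)\ne 0\}$ does not follow from a ``semistability bound'' ($\Ec$ is not assumed semistable); the clean argument is to embed $\Ec\hookrightarrow\Oc_{\Cbb\Pbb^1}(N)^{\oplus s}$ for $N\gg 0$ (Serre applied to $\Ec^\vee$), so that any nonzero $\Oc_{\Cbb\Pbb^1}(k)\to\Ec$ induces a nonzero map $\Oc_{\Cbb\Pbb^1}(k)\to\Oc_{\Cbb\Pbb^1}(N)$ and hence $k\le N$. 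With these two repairs the argument is complete.
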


Hence we need only two integers $(-k_1, -k_2)$, representing degrees of line bundles, to describe any rank-two vector bundle on $\Cbb\Pbb^1$. As such we will simply write $\Xfr_{(-k_1, -k_2)}$ for a $(1|2)$-dimensional supermanifold $\Xfr$ modelled on $\Cbb\Pbb^1$ and the vector bundle $E \equiv (-k_1, -k_2)$. In order to construct an explicit trivialisation for $\Xfr_{(-k_1,- k_2)}$ we firstly need one for $\Cbb\Pbb^1$, and here we employ the usual trivialisation, described in detail in \cite{GH}. This trivialisation consists of two open sets $U$ and $V$. If $x$ denotes the complex coordinate on $U$ and $y$ that on $V$, then they are related by the transition functions $f = \{f_{UV}\}$ as follows,
\[
y \sim f_{UV}(x) = \frac{1}{x}.
\]
We refer to this trivialisation of $\Cbb\Pbb^1$ as the \emph{standard trivialisation}. The supermanifold $\Xfr_{(-k_1, -k_2)}$ is now covered by $\Uc$ and $\Vc$ where $U = \Uc_\red$ and $V = \Vc_\red$. Let $(X, \q^1, \q^2)$ denote coordinates on $\Uc$ and $(Y, \eta^1, \eta^2)$ coordinates on $\Vc$. We may identify the coordinates $x = X_\red$ and $y = Y_\red$ with those on the reduced space $\Cbb\Pbb^1$. The transition functions for $\Xfr_{(-k_1, -k_2)}$, denoted $\rho = (\rho_{\Uc\Vc})$, are given by
\begin{align}
\left(
\begin{array}{l}
Y\\
\eta^1\\
\eta^2
\end{array}
\right)
\sim 
\rho_{\Uc\Vc}(X, \q^1, \q^2)
=
\left(
\begin{array}{c}
x^{-1} + \al_{UV}(x)~\q^1\wedge\q^2\\
\lam_1x^{-k_1}\q^1\\
\lam_2x^{-k_2}\q^2
\end{array}
\right),
\label{ejxnnxjkenxjken}
\end{align}
where $\lam_1, \lam_2$ are non-zero, complex constants, and $\al_{UV}$ is a holomorphic function on $U\cap V \cong \Cbb^\times$. If $\Xfr_{(-k_1, -k_2)}$ is taken to be the split model then we must set $\{\al_{UV}\} = 0$ in \eqref{ejxnnxjkenxjken}. \\

Now let $\lam = \lam_1\oplus \lam_2$ be the two-by-two matrix with $\lam_1$ and $\lam_2$ on the diagonal. For such a matrix we denote by $\Pi_{(-k_1, -k_2)}(\lam)$ the \emph{split model} given by the trivialisation in \eqref{ejxnnxjkenxjken}. Here $\Xfr_{(-k_1, -k_2)}$ need not be isomorphic to $\Pi_{(-k_1, -k_2)}(\lam)$. However both $\Pi_{(-k_1, -k_2)}(\lam)$ and $\Pi_{(-k_1, -k_2)}(\lam^\p)$ are isomorphic for any two invertible, diagonal matrices $\lam, \lam^\p$ and typically one would simply write $\Pi_{(-k_1, -k_2)}$, should a trivialisation not be specified. The reason we have introduced $\Pi_{(-k_1, -k_2)}(\lam)$ is to show that there exist trivialisations of $\Pi_{(-k_1, -k_2)}$ which are \emph{not} superconformal. Indeed, as can be deduced from Theorem \ref{chncnrcnrk}, we have the following:

\begin{COR}
\label{rjcnkrnjckrncjkr}
The split model $\Pi_{(-k_1, -k_2)}(\lam)$ is maximally superconformal iff $k_1 + k_2 = 2$ and $\det\lam = -1$.
\qed
\end{COR}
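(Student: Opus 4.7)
The strategy is to pin down both conditions of the corollary by a direct computation of the defining cocycle \eqref{jcnknckncjkck} for the split trivialisation \eqref{ejxnnxjkenxjken} with $\al_{UV}\equiv 0$. Theorem \ref{chncnrcnrk} already guarantees the two obstructions will be (a) a degree relation on $E = \Oc_{\Cbb\Pbb^1}(-k_1)\oplus \Oc_{\Cbb\Pbb^1}(-k_2)$ coming from the identification $\det E \cong T\Cbb\Pbb^1$, since splittability is automatic by Theorem \ref{rjcnkrnckrcnkr}, and (b) some discrete constraint on the diagonal matrix $\lam$. But rather than invoke Theorem \ref{chncnrcnrk} only for existence and separately solve for $\lam$, I would read both conditions off the same pushforward computation, which pins down the numerics without committing to a sign convention.

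Concretely, I would compute $(\rho_{\Uc\Vc})_*D_{\Uc,1}$ by letting $D_{\Uc,1} = \pt_{\q^1} - \q^2\pt_X$ act coordinate-wise on $(Y,\eta^1,\eta^2)$ and then convert everything to target coordinates via $X = Y^{-1}$ and $\q^a = \lam_a^{-1}Y^{-k_a}\eta^a$. The resulting $\pt_{\eta^2}$-component vanishes identically, the $\pt_Y$-component works out to $\lam_2^{-1}Y^{2-k_2}\eta^2$, and the $\pt_{\eta^1}$-component equals $\lam_1 Y^{k_1}$ up to a nilpotent correction proportional to $\eta^1\eta^2$. Writing the outcome as $g^1_{\Uc\Vc,1}D_{\Vc,1} + g^2_{\Uc\Vc,1}D_{\Vc,2}$ with $D_{\Vc,a} = \pt_{\eta^a} - \eta^{\bar a}\pt_Y$, the vanishing of the $\pt_{\eta^2}$-component forces $g^2_{\Uc\Vc,1} = 0$, which is precisely diagonality in the sense of Definition \ref{rjcbjkrcbnjkr}, while matching the $\pt_Y$-component against $-g^1_{\Uc\Vc,1}\eta^2$ reduces to the single identity $\lam_2^{-1}Y^{2-k_2} = -\lam_1 Y^{k_1}$ on $U\cap V$.

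This identity cleanly decouples into the two claimed conditions: equating powers of $Y$ gives $k_1 + k_2 = 2$, and equating scalar prefactors gives $\lam_1\lam_2 = -1$, i.e.\ $\det\lam = -1$. A symmetric computation for $D_{\Uc,2}$ produces the same pair of conditions. Conversely, if both conditions hold the displayed identity is automatic and one reads off directly that $(\rho_{\Uc\Vc})_*D_{\Uc,a}$ is a scalar multiple of $D_{\Vc,a}$, yielding a diagonal defining cocycle and hence a maximal superconformal structure. The main obstacle is not conceptual but sign-bookkeeping, namely tracking signs in identities like $\q^1\q^2 = (\lam_1\lam_2)^{-1}Y^{-k_1-k_2}\eta^1\eta^2$ and verifying that the nilpotent correction to $g^1_{\Uc\Vc,1}$ does not spoil diagonality when matched against the $\pt_Y$-component.
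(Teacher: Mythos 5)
Your computation is correct and is essentially the paper's argument: the paper deduces the corollary from Theorem \ref{chncnrcnrk}, whose proof (Appendix \ref{ejnckenckrckrrc}) performs exactly this pushforward of $D_{\Uc,a}$ and reduces matters to the cocycle identity $\pt f_{VU}/\pt y = \det\zeta_{VU}$, which for the trivialisation \eqref{ejxnnxjkenxjken} is precisely your identity $\lam_2^{-1}Y^{2-k_2} = -\lam_1 Y^{k_1}$, yielding $k_1+k_2=2$ and $\det\lam=-1$. You have simply specialised that general computation to the standard trivialisation rather than citing the theorem, so the content is the same.
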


In accordance with Definition \ref{rjcbjkrcbnjkr}, we see from the above corollary and the preceding discussion that $\Pi_{(-k_1, -k_2)}$ will admit a maximally superconformal structure iff $k_1 + k_2 = 2$. In the literature there exists a construction of the supermanifold $\Cbb\Pbb^{m|n}$. It is obtained as the quotient space $(\Cbb^{m+1|n}- \{0\})/\Cbb^\times$ in much the same way as $\Cbb\Pbb^m = (\Cbb^{m+1} - \{0\})/\Cbb^\times$. This supermanifold is an instance of a more general notion of a \emph{supergrassmannian}, detailed in \cite{YMAN}. In the notation used in this paper, the supermanifold $\Cbb\Pbb^{1|2}$ is given, up to isomorphism, by $\Pi_{(-1, -1)}$. In particular, we see from Corollary \ref{rjcnkrnjckrncjkr} that it will admit a maximally superconformal structure. In light of these observations we set $\Pi_{(-1, -1)} = \Cbb\Pbb^{1|2}$ and term the supermanifold $\Pi_{(-k_1, -k_2)}$, for $(k_1, k_2)\neq(1, 1)$, a \emph{non-standard $\Cbb\Pbb^{1|2}$}. 


\section{The Superparticle Lagrangian and Consistency}
\label{rjnvknkjrmcr}

Let $\Xfr = (M, \Oc_M)$ be a supermanifold. We fix here the set of fields 
\begin{align}
\Fc = \left\{ \Phi : \Xfr \lra\Cbb^{1|1}\mid \mbox{$\Phi$ is smooth}
\right\}
\label{ecfcjhbhcjbdfc}
\end{align}
where, by requiring $\Phi$ be smooth it is meant that its component fields are smooth. As such we can identify $\Fc$ with the sheaf of \emph{smooth} functions on $\Xfr$. In this section we investigate the Lagrangian of the superparticle as a quantity defined on the supermanifold $\Xfr$. It is a functional which sends each field $\Phi$ to a density on $\Xfr$, i.e., a smooth section of the Berezinian line bundle $\Ber~\Xfr$. For an interpretation of the Lagrangian as a density defined on the space of fields, and valued in differential forms, see \cite[p. 26]{FREEDSUSY}. Some key issues addressed in this section are conditions under which the superparticle Lagrangian is globally well-defined, and on its relation to the \emph{component Lagrangian}.


\subsection{The Superparticle Lagrangian}\label{jckrnckrnckjrkxe}
Let $(\Ufr,\rho)$ denote a trivialisation of $\Xfr$. Wth respect to $\Ufr$, and for each field $\Phi$, we introduce the
\emph{superparticle Lagrangian zero-cochain}, or simply the \emph{superparticle Lagrangian}, as the following zero-cochain $\Lc(\Phi) = \{\Lc_\Uc(\Phi), \Lc_\Vc(\Phi), \ldots\}$ where
\[
\Lc_\Uc(\Phi) = \frac{1}{2}\e^{ab}\left\langle D_{\Uc,a}\Phi, D_{\Uc,b}\Phi\right\rangle~[\drm X~\drm\q^1 \drm\q^2]
\]
for $\e^{ab}$ antisymmetric, i.e., $\e^{12} = 1 = -\e^{21}$ and zero otherwise; and $[\drm X~\drm\q^1\drm\q^2]$ the trivialising section of the Berezinian over $\Uc$ (we refer to \cite{WNS} for the notation). For an analysis of the superparticle in the case where $\Xfr$ is affine superspace, see \cite[p. 42]{FREEDSUSY}. 
\\

With respect to the description of the superparticle Lagrangian given so far, we formulate the notion of it being ``global'' as follows. Let $\underline\Ber~\Xfr$ denote the sheaf of holomorphic sections of the Berezinian line bundle\footnote{ 
For the purposes of this paper it is irrelevant to provide details of Berezinians for supermanifolds in full generality. This is because, in applications to follow, the Berezinian will in fact be constant (see Corollary \ref{rcjnkrnckrnckr}).
} 
$\Ber~\Xfr$ and $\underline \Ber^\8\Xfr$ the sheaf of smooth sections. Then $\Lc(\Phi)$ lies in $C^0(\Ufr;\underline \Ber^\8\Xfr)$. There is a natural coboundary operator $\dt : C^0(\Ufr;\underline\Ber^\8\Xfr)\ra  C^1(\Ufr;\underline\Ber^\8\Xfr)$ sending $\Lc(\Phi) \ra (\dt\Lc)(\Phi)$, which is following the quantity defined on intersections:
\[
(\dt\Lc)_{\Uc\Vc}(\Phi) = \Lc_\Vc(\Phi) - \Lc_\Uc(\Phi).
\]
Hence if $(\dt\Lc)_{\Uc\Vc}(\Phi) = 0$ for all $\Uc, \Vc\in \Ufr$ with non-empty intersection, it follows that $\Lc(\Phi)$ is a global section of $\underline\Ber^\8\Xfr$. This is however too strong a requirement for our purposes in this paper. We consider instead the weaker case where $\Lc(\Phi)$ is a global section of $\underline\Ber^\8\Xfr\otimes \underline \ell^\8$, for  $\ell\ra \Xfr$ a line bundle. Here $\underline \ell$ denotes the sheaf of holomorphic sections of $\ell$ and $\underline\ell^\8$ that of smooth sections. 

\begin{DEF}
\label{fjcnkenckec}
\emph{
The superparticle Lagrangian $\Lc$ is said to be a \emph{global superparticle Lagrangian} for $\Xfr$ if there exists a line bundle $\ell\ra \Xfr$ and a smooth section $s$ of $\ell$ such that $\Lc(\Phi)\otimes s$ is a global section of  $\underline\Ber^\8\Xfr\otimes \underline \ell^\8$ for all fields $\Phi$ in $\Fc$. 
 }
 \end{DEF}

%
%
%
%
%

%

To elaborate on the above definition, let $(\Ufr,\rho)$ denote a trivialisation of $\Xfr$. On a non-empty intersection $\Uc\cap\Vc$ there is a natural way to compare $\Lc_\Uc(\Phi)$, a quantity defined on $\Uc$, with $\Lc_\Vc(\Phi)$, a quantity defined on $\Vc$, by means of the transition functions $\rho$. Then to require $\Lc$ be a \emph{global} superparticle Lagrangian in the sense of Definition \ref{fjcnkenckec} it is both necessary and sufficient for there to exist a multiplicative 1-cocycle of smooth functions $h(\Ufr) = \{h_{\Uc\Vc}\}$ such that 
\begin{align}
(\rho_{\Uc\Vc})_*\Lc_\Uc(\Phi) = h_{\Uc\Vc} \Lc_\Vc(\Phi) 
\label{jcnkrnckjrcnklrcmlr}
\end{align}
for all fields $\Phi$. The cocycle $h(\Ufr)$ is then said to be a \emph{defining cocycle} for $\Lc$. This allows one to identify sufficient conditions for the superparticle Lagrangian to be global, as in the following proposition.

\begin{PROP}\label{ejcnknckrncjrknr}
Let $\Xfr$ be a $(1|2)$-dimensional, maximally superconformal supermanifold. If the hermitian product\footnote{
For two complex variables $z$ and $w$, their \emph{hermitian product} is defined to be $z\overline w$.
} of the defining cocycles of its superconformal structure is real, then $\Lc$ will be a global superparticle Lagrangian.
\end{PROP}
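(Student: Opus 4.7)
The approach is to compute the pushforward $(\rho_{\Uc\Vc})_*\Lc_\Uc(\Phi)$ explicitly in terms of quantities on $\Vc$ and extract a defining cocycle in the sense of \eqref{jcnkrnckjrcnklrcmlr}. The two key structural inputs are delivered by maximality. First, by Definition \ref{rjcbjkrcbnjkr} the defining cocycle $g(\Ufr)$ of the superconformal structure may be taken diagonal, so that
\[
(\rho_{\Uc\Vc})_*D_{\Uc,a} = g_{\Uc\Vc,a}\,D_{\Vc,a}\qquad (a=1,2)
\]
with no summation over $a$. Second, Corollary \ref{rcjnkrnckrnckr} says the Berezinian line bundle $\Ber\,\Xfr$ is trivial, so after a suitable choice of local trivialisations the sections $[\drm X\,\drm\q^1\drm\q^2]_\Uc$ glue into a global section of $\underline\Ber^\8\Xfr$. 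Together these observations strip the Berezinian factor out of the transition rule and reduce the problem to the transformation of the scalar $\e^{ab}\langle D_{\Uc,a}\Phi, D_{\Uc,b}\Phi\rangle$.

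The core computation then uses sesquilinearity of $\langle\cdot,\cdot\rangle$:
\begin{align*}
(\rho_{\Uc\Vc})_*\bigl(\e^{ab}\langle D_{\Uc,a}\Phi, D_{\Uc,b}\Phi\rangle\bigr)
&= g_{\Uc\Vc,1}\overline{g_{\Uc\Vc,2}}\,\langle D_{\Vc,1}\Phi, D_{\Vc,2}\Phi\rangle \\
&\quad - \overline{g_{\Uc\Vc,1}}\,g_{\Uc\Vc,2}\,\langle D_{\Vc,2}\Phi, D_{\Vc,1}\Phi\rangle.
\end{align*}
The two scalar coefficients on the right-hand side are complex conjugates of one another, so the expression collapses to a common scalar multiple of $\e^{ab}\langle D_{\Vc,a}\Phi, D_{\Vc,b}\Phi\rangle$ precisely when those coefficients agree, i.e.\ when $g_{\Uc\Vc,1}\overline{g_{\Uc\Vc,2}}\in\Rbb$. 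This is the hypothesis, and it produces the candidate defining cocycle
\[
h_{\Uc\Vc} := g_{\Uc\Vc,1}\,\overline{g_{\Uc\Vc,2}}.
\]

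What remains is standard book-keeping. The multiplicative 1-cocycle condition for $h(\Ufr)$ follows from that of $g_{\cdot,1}$ and of $\overline{g_{\cdot,2}}$; smoothness is automatic since $h$ is the product of a holomorphic and an antiholomorphic cocycle; and nonvanishing of each $h_{\Uc\Vc}$ follows from nonvanishing of the $g_{\Uc\Vc,a}$. Hence $h(\Ufr)$ determines a smooth line bundle $\ell\to\Xfr$ together with a smooth section $s$ whose transitions cancel those of $\Lc(\Phi)$, so that $\Lc(\Phi)\otimes s$ is a global section of $\underline\Ber^\8\Xfr\otimes\underline\ell^\8$ for every $\Phi\in\Fc$, as Definition \ref{fjcnkenckec} requires.

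The crux is the collapse step: the antisymmetric combination reduces to a single scalar times $\Lc_\Vc(\Phi)$ \emph{only} under the reality condition on $g_{\Uc\Vc,1}\overline{g_{\Uc\Vc,2}}$. All other ingredients --- diagonality via maximality, triviality of the Berezinian, and cocycle/line-bundle book-keeping for $h$ --- are either furnished directly by the hypotheses or are routine, so the content of the proposition lies precisely in recognising that the hermitian reality condition is what consolidates the two cross terms of $\e^{ab}\langle\cdot,\cdot\rangle$ into a common coefficient.
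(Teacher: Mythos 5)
Your proposal is correct and follows essentially the same route as the paper: diagonality of the defining cocycle from maximality, triviality of the Berezinian via Corollary \ref{rcjnkrnckrnckr} to discard the Jacobian factor, and the observation that the two cross terms of $\e^{ab}\langle D_a\Phi, D_b\Phi\rangle$ carry coefficients $G_{\Uc\Vc}$ and $\overline{G_{\Uc\Vc}}$, which consolidate exactly when $G_{\Uc\Vc}=g_{\Uc\Vc,1}\overline{g_{\Uc\Vc,2}}$ is real. The only cosmetic difference is that the paper records the failure term explicitly as $i\,\Im_{\Uc\Vc}\,\mathcal M_\Vc(\Phi)$ (for later use), whereas you note the collapse directly and spell out the cocycle bookkeeping for $h=G$, which the paper leaves implicit.
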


\begin{proof}
Recall from Definition \ref{rjcbjkrcbnjkr} that if $\Xfr$ is maximally superconformal, then there exists a trivialisation $(\Ufr, \rho)$ in which it is superconformal and the defining cocycle $g(\Ufr)$ is diagonal. Set $g(\Ufr) = \{(g_{\Uc\Vc, a})\}$ where $g_{\Uc\Vc,a} := (g_{\Uc\Vc,a}^a)$. The left-hand side of \eqref{jcnkrnckjrcnklrcmlr} is computed as follows,
\begin{align}
 (\rho_{\Uc\Vc})_*\Lc_\Uc(\Phi)
&= \frac{1}{2}\e^{ab}\left\langle (\rho_{\Uc\Vc})_*D_{\Uc,a}\Phi, (\rho_{\Uc\Vc})_*D_{\Uc,b}\Phi\right\rangle~
\rho_{\Vc\Uc}^*[\drm X~\drm\q^1 \drm\q^2]
\notag
\\
 &= \frac{1}{2}\e^{ab} g_{\Uc\Vc, a}\overline{g_{\Uc\Vc, b}} 
\left\langle D_{\Vc,a}\Phi, D_{\Vc,b}\Phi\right\rangle
~\Ber~J(\rho_{\Vc\Uc})~[\drm Y~\drm\eta^1\drm\eta^2].
\notag\\
&= 
\frac{1}{2}\e^{ab} g_{\Uc\Vc, a}\overline{g_{\Uc\Vc, b}} 
\left\langle D_{\Vc,a}\Phi, D_{\Vc,b}\Phi\right\rangle
~[\drm Y~\drm\eta^1\drm\eta^2] 
\label{rchbrjcbhrjk}
\end{align}
where, in the last line above, we have used Corollary \ref{rcjnkrnckrnckr} to justify setting $\Ber~J(\rho_{\Vc\Uc}) = 1$.  Now set 
\begin{align}
\label{dkjcnnckfnckf}
G_{\Uc\Vc} := g_{\Uc\Vc, 1}\overline{g_{\Uc\Vc, 2}}.
\end{align}
Here $G_{\Uc\Vc}$ is a smooth, $\Cbb^{1|1}$-valued function on $\Uc\cap \Vc$ and we can make sense of writing it in terms of its real and imaginary parts component-wise. Denote these $\Re_{\Uc\Vc}$ and $\Im_{\Uc\Vc}$ respectively so that 
$
G_{\Uc\Vc} = \Re_{\Uc\Vc} + i\Im_{\Uc\Vc},
$
where $i = \sqrt{-1}$. Then from \eqref{rchbrjcbhrjk} we find,
\begin{align}
(\rho_{\Uc\Vc})_*\Lc_\Uc(\Phi) - \Re_{\Uc\Vc}\Lc_\Vc(\Phi) = i \Im_{\Uc\Vc} \mathcal M_\Vc(\Phi),
\label{jcnkncjknckjncke}
\end{align}
where we have set
\begin{align}
\mathcal M_\Vc(\Phi) := \frac{1}{2} s^{ab}\left\langle D_{\Vc,a}\Phi, D_{\Vc,b}\Phi\right\rangle
~[\drm Y~\drm\eta^1\drm\eta^2]
\label{jcnkrnckrmckjmrkjr}
\end{align}
for $s^{12} = s^{21} = 1$ and zero otherwise. The proposition now follows from \eqref{jcnkncjknckjncke}.
\end{proof}

For $\Xfr$ maximally superconformal, set $G(\Ufr) := \{G_{\Uc\Vc}\}$ where $G_{\Uc\Vc}$ is the Hermitian product of the defining cocycles introduced in \eqref{dkjcnnckfnckf}. In order to obtain both necessary and sufficient conditions for $\Lc$ to be global, we introduce the following definition. 

\begin{DEF}
\label{jncklfncknjckrcr}
\emph{
Suppose $\Lc$ is a global superparticle Lagrangian for a maximally superconformal supermanifold $\Xfr$. Then $\Lc$ is said to be \emph{compatible} with the superconformal structure if there exists a trivialisation $(\Ufr,\rho)$ in which $h(\Ufr) = G(\Ufr)$. 
}
\end{DEF}

 Now let $\Xfr$ be maximally superconformal. We see from Proposition \ref{ejcnknckrncjrknr} that the existence of a compatible, global superparticle Lagrangian is now \emph{equivalent} to requiring the product of the defining cocycles for the superconformal structure be real. In the sections to follow it will be shown that such compatible Lagrangians do \emph{not} exist over $\Cbb\Pbb^1$.

\subsection{The Component Lagrangian}\label{djckncjkrnckjr}
In supersymmetric field theories with a prescribed Lagrangian (defined on the supermanifold), one typically looks to reduce this Lagrangian to the so-called \emph{component Lagrangian} by means of performing a \emph{Berezin integral}. This method of investigation is known as the \emph{component formalism}, and in it one can employ techniques
 of differential geometry to derive various properties defining the physical system, such as the equations of motion. There are however obstructions to reducing the supermanifold Lagrangian to the component Lagrangian, one of the being the supermanifold itself.\footnote{see for instance \cite{WD} where this problem arises in the context of superstring theory} Nevertheless, both the supermanifold Lagrangian and the component Lagrangian can be defined as zero-cochains, and relations between them can be explored.

\begin{REM}\emph{
It should be noted that in \cite{FREEDSUSY} the component Lagrangian differs from the Berezin integral of the supermanifold Lagrangian by a Poincar\'e-invariant differential operator. However, in this paper, we are only concerned with the superparticle admitting two supersymmetries (i.e., where the supermanifold $\Xfr$ has odd dimension \emph{two}) and, as remarked in \cite[p. 77]{FREEDSUSY}, examples where the Poincar\'e-invariant differential operator is non-zero appear in theories with at least four supersymmetries. Hence, without loss of generality, we take the component Lagrangian here to be \emph{defined} as the Berezin integral of the supermanifold Lagrangian. 
}
\end{REM}

The component Lagrangian is described here in much the same way that the superparticle Lagrangian was described in Section \ref{jckrnckrnckjrkxe}. Let $\Phi\in \Fc$ denote a field and write  
\begin{align}
\Phi(Y, \eta^1, \eta^2) = \phi(y) + \psi_a(y)\eta^a + F(y)~ \eta^1\wedge \eta^2
\label{fjnckrnckrnckr}
\end{align}
where $a = 1, 2$ is being summed. The component fields of $\Phi$ is the four-tuple of fields $(\phi, \psi_a, F)$. However, by abuse of notation we will identify $\Phi$ with its components.\footnote{
If the supermanifold is split then $\Phi$ uniquely determines, and is uniquely determined, by its components. As our applications are all on split supermanifolds, there is no ambiguity in identifying $\Phi$ with its components.
} Regarding the Berezin integral, it may be thought of as a map from $\Ber~\Xfr$ to $\det M$, where $\det M =  \wedge^{\mathrm{top}}(T^*M)$ is the determinant line bundle of $M$. As $M$ is one-dimensional we may identify $\det M$ with the one-forms $\Om^1(M)$. Over $\Vc$ the Berezin integration is given by,
\[
\int_{\Ber}\eta^1\wedge \eta^2[\drm Y~\drm\eta^1\drm \eta^2] =\drm y
\]
and zero otherwise. To illustrate, we see from \eqref{fjnckrnckrnckr} that $\int_\Ber\Phi~[\drm Y~\drm\eta^1\drm \eta^2] =F~\drm y$. 
\\

Now let $\Ufr = \{\Uc, \Vc, \ldots\}$ denote a covering for $\Xfr_{(M, E)}$ and let $\Ufr_\red= \{U, V, \ldots\}$ denote a covering of $M$. Recall that the superparticle Lagrangian $\Lc(\Phi) = \{\Lc_\Uc(\Phi), \Lc_\Vc(\Phi), \ldots\}$ for $\Xfr$ was introduced as a zero-cochain with respect to $\Ufr$. Set,
\[
L_U(\Phi) := \int_{\Ber} \Lc_\Uc(\Phi).
\]
Thus we have a quantity defined on each open set $U\subset M$ and therefore we obtain a zero-cochain $L(\Phi):= \{L_U(\Phi), L_V(\Phi), \ldots\}$ valued in the sheaf of smooth sections of $\det M$, denoted $\underline \det^\8 M$. Now note that we have here two cochain complexes with which to wrestle: $(C^\bt(\Ufr, \underline\Ber^\8\Xfr), \dt)$ and $(C^\bt(\Ufr_\red, \underline\det^\8 M), \dt_\red)$. We remark that it need \emph{not} in general hold that the Berezin integration map will define a morphism of these differential complexes. In particular, there is no need for $\int_\Ber (\dt\Lc)_{\Uc\Vc}(\Phi) = (\dt_\red L)_{UV}(\Phi)$ to hold for all $\Phi$. This leads us to the following definition.

\begin{DEF}
\label{rkclrcklrmclkr}
\emph{
A field $\Phi$ will be called a \emph{good field} for the Lagrangian $\Lc$ on $\Xfr$ if the Berezin integral commutes with the coboundary operator along $\Lc(\Phi)$, i.e., if
\begin{align*}
\int_\Ber (\dt\Lc)(\Phi) = (\dt_\red L)(\Phi). 
\end{align*}
}
\end{DEF}

We denote the set of \emph{good fields} for $\Lc$ by $\Gc_\Lc$. It is a subset of $\Fc$. A characterisation of $\Gc_\Lc$ is not undertaken in this paper and we defer a more detailed discourse on $\Gc_\Lc$ to a forthcoming work \cite{BETT}. Throughout the rest of this paper, the results derived will be valid only on the subset of good fields $\Gc_\Lc$ of $\Fc$, unless otherwise stated.

\subsection{Consistency of the Component Lagrangian}
As mentioned at the start of Section \ref{djckncjkrnckjr}, once one has the component Lagrangian, techniques of differential geometry become applicable. In this paper we are interested, in particular, in the variational derivative, whose extrema are taken to satisfy the equations of motion. The notion of \emph{consistency} that we formulate here is concerned with the ability to ``consistently'' define the variational derivative of the component Lagrangian. To elaborate, firstly fix a set of fields $\Fc$ for the Lagrangian $\Lc$ and denote by $\dt_\Fc$ the variational derivative. Recall that the component Lagrangian $L$ is a 0-cochain on $M$. In each open set $U\subset M$ it makes sense to take the variational derivative of $L_U(\Phi)$, denoted $\dt_\Fc L_U(\Phi)$.\footnote{
Recall that for a functional $L$, depending on a field $f(x)$, the functional, or variational derivative along $f$ is given by
\[
\frac{\dt L}{\dt f} = \left(\frac{\pt L}{\pt f} - \frac{\pt}{\pt x}\left( \frac{\pt L}{\pt f^\p}\right)\right)\dt f.
\]
} Hence it is natural to require the following on all non-empty intersections $U\cap V$:
\begin{align}
\mbox{$\dt_\Fc L_U(\Phi) = 0$ iff $\dt_\Fc L_V(\Phi) = 0$.}
\label{fjcnkrnckr}
\end{align}
The variations $\dt_\Fc L_U$ are termed \emph{infinitesimal} since they are defined locally. If \eqref{fjcnkrnckr} holds, we see that these variations are in fact \emph{global}, i.e., on $U\cap V$ that $\dt_\Fc L_U(\Phi)\propto \dt_\Fc L_V(\Phi)$ for all $\Phi$. This motivates the following definition. 

\begin{DEF}
\label{jbckncjkcnkr}
\emph{
The component Lagrangian $L$ is said to be \emph{consistent} if its infinitesimal variations are global. 
}
\end{DEF}

From a physical perspective, it would be desirable for the given component Lagrangian to be consistent. This is because consistency here represents the statement that: \emph{the equations of motion are independent of choice of coordinate system on $M$}. Now, in general, the component Lagrangian will not be consistent. However, we can force it to be consistent by restricting the set of fields $\Fc$ that we consider in the field theory. This leads to the notion of a \emph{consistent field} as follows. 

\begin{DEF}
\emph{
A field $\Phi$ is said to be \emph{consistent} for the component Lagrangian $L$ if $L$ is consistent along $\Phi$.
}
\end{DEF}

The locus of consistent fields for $L$ will be denoted $\Fc_L$. It is a subset of $\Fc$. A \emph{good} consistent field for $L$ is a consistent field which is also good, i.e., it resides in the intersection $\Fc_L\cap \Gc_\Lc$, which may or may not be empty. In what remains of this paper we intend to characterise this locus of good, consistent fields (presuming the set of such fields is non-empty). In the case where $\Lc$ is global we have the following result. 

\begin{PROP}\label{djncjklnrjcmrjcmlr}
Let $\Xfr$ be $(1|2)$-dimensional and maximally superconformal and suppose $\Lc$ is a global superparticle Lagrangian on $\Xfr$. Then the set of good, consistent fields, if non-empty, is given by
 \begin{align}
\label{fjckcnjknrk}
\Fc_L\cap \Gc_\Lc = \left\{ \Phi\in \Gc_\Lc \mid \dt_\Fc\left(\e^*\Lc(\Phi)\right) = 0\right\}.
\end{align}
Moreover, if the cocycle $h$ defining the superconformal structure satisfies $\e^*h = h$, then we have $\Fc_L\cap \Gc_\Lc = \Gc_\Lc$. 
\end{PROP}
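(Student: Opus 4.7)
The plan is to exploit the global condition $(\rho_{\Uc\Vc})_*\Lc_\Uc(\Phi) = h_{\Uc\Vc}\Lc_\Vc(\Phi)$ together with the vanishing of the Berezinian Jacobian on a maximally superconformal supermanifold (Corollary~\ref{rcjnkrnckrnckr}). In $V$-coordinates on $\Uc\cap\Vc$ this reads $\Lc_\Uc(\Phi) = h_{\Uc\Vc}\Lc_\Vc(\Phi)$, so $(\dt\Lc)_{\Uc\Vc}(\Phi) = (1 - h_{\Uc\Vc})\Lc_\Vc(\Phi)$. Combining this with the good-field condition $\int_\Ber(\dt\Lc)(\Phi) = (\dt_\red L)(\Phi)$ yields the key identity
\[
L_V(\Phi) - L_U(\Phi) \;=\; \int_\Ber (1 - h_{\Uc\Vc})\Lc_\Vc(\Phi)
\]
on every non-empty intersection $U \cap V \subset M$.

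Next I would expand $h_{\Uc\Vc} = \e^*h_{\Uc\Vc} + h^a_{\Uc\Vc}\eta^a + h^{12}_{\Uc\Vc}\eta^1\eta^2$ and $\Lc_\Vc(\Phi)$ analogously in $\eta^1,\eta^2$. Evaluating the Berezin integral term-by-term and using that the body of $\Lc_\Vc(\Phi)$ is $\e^*\Lc_\Vc(\Phi)$ and its $\eta^1\eta^2$-component reproduces $L_V(\Phi)$, the identity decomposes as
\[
L_U(\Phi) \;=\; (\e^*h_{\Uc\Vc})\,L_V(\Phi) \;+\; h^{12}_{\Uc\Vc}\,\e^*\Lc_\Vc(\Phi)\,\drm y \;+\; (\text{middle-level cross terms}),
\]
where the cross terms pair $h^a_{\Uc\Vc}$ against the $\eta^b$-components of $\Lc_\Vc(\Phi)$ coming from $\langle D_{\Vc,a}\Phi, D_{\Vc,b}\Phi\rangle$. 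Applying $\dt_\Fc$, and using that the transition cocycles are field-independent, the consistency condition $\dt_\Fc L_U(\Phi) \propto \dt_\Fc L_V(\Phi)$ is equivalent to the variational vanishing of the correction terms along $\Phi$, modulo $\dt_\Fc L_V(\Phi)$. I would then use the supersymmetry algebra $\tfrac{1}{2}[D_{\Vc,1},D_{\Vc,2}] = -\pt_Y$ satisfied by the superconformal derivatives to rewrite the middle components of $\Lc_\Vc(\Phi)$ in terms of $\pt_Y$-derivatives of lower-order expressions; this shows the cross terms contribute only total derivatives on $M$, hence impose no genuine constraint. What then remains of the consistency condition is $h^{12}_{\Uc\Vc}\,\dt_\Fc(\e^*\Lc_\Vc(\Phi)) = 0$ on each overlap, which amounts to $\dt_\Fc(\e^*\Lc(\Phi)) = 0$ and gives the stated characterization of $\Fc_L\cap\Gc_\Lc$.

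For the moreover, the hypothesis $\e^*h = h$ forces $h^a_{\Uc\Vc} = h^{12}_{\Uc\Vc} = 0$, so the correction and cross terms vanish identically and the key identity collapses to $L_U(\Phi) = (\e^*h_{\Uc\Vc})L_V(\Phi)$ on every overlap. Since $\e^*h_{\Uc\Vc}$ is a nowhere-vanishing, field-independent transition function on $M$, applying $\dt_\Fc$ directly gives $\dt_\Fc L_U(\Phi) = (\e^*h_{\Uc\Vc})\,\dt_\Fc L_V(\Phi)$, whence $\dt_\Fc L_U(\Phi) = 0$ iff $\dt_\Fc L_V(\Phi) = 0$ for all $\Phi$. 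Every good field is therefore consistent, so $\Fc_L \cap \Gc_\Lc = \Gc_\Lc$.

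The main obstacle is the reduction of the middle-level cross terms to total derivatives via the supersymmetry algebra: this is where the specific structure of the superparticle Lagrangian as $\tfrac12\e^{ab}\langle D_{\Vc,a}\Phi,D_{\Vc,b}\Phi\rangle$ must be used in an essential way, and the signs arising from the $\eta$-expansion and Berezin integration have to be tracked carefully. Everything else is essentially bookkeeping on the $\eta$-expansion and the behaviour of $\dt_\Fc$ under multiplication by transition cocycles, and the moreover part is a clean consequence of the vanishing of these same cross terms.
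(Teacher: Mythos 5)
Your argument follows essentially the same route as the paper: use the defining relation $(\rho_{\Uc\Vc})_*\Lc_\Uc(\Phi)=h_{\Uc\Vc}\Lc_\Vc(\Phi)$ together with the good-field condition to obtain a relation between $L_U(\Phi)$ and $L_V(\Phi)$ on overlaps, Berezin-integrate term by term in the $\eta$-expansion of $h_{\Uc\Vc}$, and then apply $\dt_\Fc$; the leftover term $h^{12}_{\Uc\Vc}\,\dt_\Fc(\e^*\Lc_\Vc(\Phi))$ is exactly the obstruction to consistency, and the ``moreover'' part is handled identically in both treatments. The one place where you diverge is also the one place where your argument is not actually established: you allow odd components $h^a_{\Uc\Vc}\eta^a$ in the cocycle and then propose to kill the resulting ``middle-level cross terms'' by rewriting them as total derivatives via the relation $\tfrac12[D_{\Vc,1},D_{\Vc,2}]=-\pt_Y$. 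That step is only sketched, and it is not the right mechanism. The cocycle $h$ is the transition datum of a line bundle $\ell\ra\Xfr$ relating two even densities, so it is a Grassmann-even invertible function; its expansion is $h_{\Uc\Vc}=h_{UV,0}+h_{UV,12}\,\eta^1\wedge\eta^2$ with no linear terms, which is how the paper writes it from the outset. The cross terms you are worried about therefore vanish identically for parity reasons, and no appeal to the supersymmetry algebra (or to discarding total derivatives, which would anyway require justification at the level of the local Euler--Lagrange expressions rather than an integrated action) is needed. With that observation substituted for the sketched step, your proof coincides with the paper's.
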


\begin{proof}
Let $\Lc$ denote the superparticle Lagrangian for $\Xfr$ and $L$ the component Lagrangian and let $(\Ufr, \rho)$ denote a trivialisation of $\Xfr$. If $\Lc$ is global, then recall from \eqref{jcnkrnckjrcnklrcmlr} that we can write $(\rho_{\Uc\Vc})_*\Lc_\Uc = h_{\Uc\Vc} \Lc_\Vc$, for $h =\{h_{\Uc\Vc}\}$ the defining cocycle for $\Lc$. Set $h_{\Uc\Vc} = h_{UV, 0} + h_{UV, 12}~\eta^1\wedge\eta^2$ in $\Vc$. In performing a Berezin integral we find,
\begin{align}
\int_\Ber(\rho_{\Uc\Vc})_*\Lc_\Uc(\Phi) - h_{\Uc\Vc,0} L_V(\Phi) = h_{UV, 12}~\e^*(\ell_\Vc(\Phi)),
\label{djcnkjncjkrnckjrnk}
\end{align}
where $\ell_\Vc$ is the functional component of $\Lc_\Vc$, i.e., that $\Lc_\Vc(\Phi) = \ell_\Vc(\Phi)[\drm Y\drm\eta^1\drm\eta^2]$. Now if $\Phi$ is a good field, it will follow that 
\begin{align}
f_{VU}^* L_U(\Phi) = \int_\Ber(\rho_{\Uc\Vc})_*\Lc_\Uc(\Phi)
\label{djncknfcknrcknr}
\end{align}
where $f = \rho_\red$ are the transition functions for the underlying manifold $M$ of $\Xfr_{(M, E)}$. Indeed \eqref{djncknfcknrcknr} is the motivation behind Definition \ref{jbckncjkcnkr}. Now from \eqref{djcnkjncjkrnckjrnk} we have the following expression on the intersection $U\cap V$,
\begin{align}
L_U(\Phi) - h_{\Uc\Vc,0} L_V(\Phi) = h_{UV, 12}~\e^*(\Lc_\Vc(\Phi)).
\label{fkmrlmklrmvlr}
\end{align}
Applying the variational derivative $\dt_\Fc$ to the above expression yields,
\begin{align}
\dt_\Fc L_U (\Phi)- h_{UV, 0}~ \dt_\Fc L_V(\Phi) = h_{UV, 12}~ \dt_\Fc (\e^*\Lc_V(\Phi)).
\label{rverrverv}
\end{align}
Hence the first part of this proposition \eqref{fjckcnjknrk} follows. As for the second, claiming $\Fc_L\cap \Gc_\Lc = \Gc_\Lc$, note that this follows from \eqref{fkmrlmklrmvlr} since $\e^*h = h$ is equivalent to $\{h_{UV, 12}\} = 0$.
\end{proof}

\section{The Superparticle on $\Cbb\Pbb^1$}
\label{rjnvkrnkrmjkvr}

In this section we consider the superparticle Lagrangian on a maximally superconformal supermanifold over $\Cbb\Pbb^1$. The goal here will be to explicitly describe the locus $\Fc_L\cap \Gc_\Lc$ of good, consistent fields for the component Lagrangian $L$. In the case where $\Lc$ is global, and $\Xfr$ is $(1|2)$-dimensional and maximally superconformal, we had the simple characterisation of these fields in Proposition \ref{djncjklnrjcmrjcmlr}. Unfortunately, this cannot be readily applied in this section. This is because, as we shall show, there cannot exist any global superparticle Lagrangian which is compatible with the superconformal structure.\footnote{
This of course does \emph{not} imply $\Lc$ cannot be globally defined. Only that, should it be globally defined, it \emph{cannot} be compatible with the superconformal structure. 
}

\subsection{A Non-Existence Result}
To begin, firstly recall Corollary \ref{rjcnkrnjckrncjkr} which asserts (essentially) that the set of $(1|2)$-dimensional, maximally superconformal, supermanifolds over $\Cbb\Pbb^1$ may be identified with the set of rank-two vector bundles of degree-$(-2)$. Let $\Pi_{(-k_1, -k_2)}$ be such a supermanifold over $\Cbb\Pbb^1$. With respect to the standard trivialisation of $\Cbb\Pbb^1$, the defining cocycle for the superconformal structure is
\begin{align}
g_{\Uc\Vc, a}(Y, \eta^1, \eta^2) = \lam_ay^{k_a} - (-1)^{a-1}k_a \lam_a y^{k_a - 1}\eta^1\wedge\eta^2,
\label{jcnkrckrmckrmckr}
\end{align}
for $a = 1, 2$. This was obtained from the more general formula, derived in \eqref{rvjnkrnckrckjr}, applied to the trivialisation given in \eqref{ejxnnxjkenxjken}. We now have the following negative result. 

\begin{PROP}
Let $\Xfr$ be a $(1|2)$-dimensional, maximally superconformal supermanifold over $\Cbb\Pbb^1$ with its standard trivialisation. A global superparticle Lagrangian, compatible with the superconformal structure on $\Xfr$, does not exist. 
\end{PROP}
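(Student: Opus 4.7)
The plan is to reduce the non-existence claim to an explicit obstruction on the single overlap $\Uc\cap\Vc$ of the standard trivialisation, using the characterisation following Definition \ref{jncklfncknjckrcr}: existence of a compatible global superparticle Lagrangian is equivalent to the Hermitian product $G_{\Uc\Vc} = g_{\Uc\Vc,1}\,\overline{g_{\Uc\Vc,2}}$ being real-valued (as a $\Cbb^{1|1}$-valued function). Thus the task becomes an explicit calculation combined with an impossibility argument on $\Cbb^\times$.

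First I would substitute the explicit cocycle \eqref{jcnkrckrmckrmckr} into $G_{\Uc\Vc}$, keeping track of the fact that from Corollary \ref{rjcnkrnjckrncjkr} we may assume $k_1+k_2=2$ and $\lam_1\lam_2=-1$. Expanding the product gives a body term $\lam_1\overline{\lam_2}\,y^{k_1}\overline{y}^{k_2}$ together with nilpotent corrections of the schematic form $y^{k_1}\overline{y}^{k_2-1}\overline{\eta^1\eta^2}$ and $y^{k_1-1}\overline{y}^{k_2}\eta^1\eta^2$, plus a top-degree correction in $\eta^1\eta^2\,\overline{\eta^1\eta^2}$. Splitting $G_{\Uc\Vc}=\Re_{\Uc\Vc}+i\,\Im_{\Uc\Vc}$ component-wise, I want to show $\Im_{\Uc\Vc}\not\equiv 0$.

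The argument then proceeds in two cases. Writing $y=re^{i\q}$ on $U\cap V\cong\Cbb^\times$, the body term's argument is $\arg(\lam_1\overline{\lam_2}) + (k_1-k_2)\q$; since this must be a constant multiple of $\pi$ uniformly in $\q$, reality of the body forces $k_1=k_2$, hence $k_1=k_2=1$, together with $\lam_1^2\in\Rbb$. In the remaining case $(k_1,k_2)=(1,1)$ with $\lam_1\in\Rbb\cup i\Rbb$ and $\lam_2=-\lam_1^{-1}$, I would plug in and simplify to obtain an expression proportional to $y\overline{y}+y\,\overline{\eta^1\eta^2}-\overline{y}\,\eta^1\eta^2 - \eta^1\eta^2\,\overline{\eta^1\eta^2}$. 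The middle two terms form $z-\overline z$ with $z=y\,\overline{\eta^1\eta^2}$, so they contribute a non-zero purely imaginary term $2i\Im(y\,\overline{\eta^1\eta^2})$, which does not vanish as an element of the relevant sheaf of smooth sections. Hence $\Im_{\Uc\Vc}\not\equiv 0$ in this case as well.

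The main obstacle I anticipate is bookkeeping with complex conjugation on the odd variables and ensuring the notion of ``real-valued $\Cbb^{1|1}$-function'' is used consistently with the conventions adopted after \eqref{dkjcnnckfnckf}; this is a convention check rather than a substantive difficulty. Once the two case-analysis is complete, the contradiction with the real condition shows that no trivialisation can satisfy $h(\Ufr)=G(\Ufr)$, and the non-existence follows.
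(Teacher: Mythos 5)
Your proposal is correct and follows essentially the same route as the paper: reduce via Proposition \ref{ejcnknckrncjrknr} to the reality of $G_{\Uc\Vc}=g_{\Uc\Vc,1}\overline{g_{\Uc\Vc,2}}$, use the body term to force $k_1=k_2=1$, and then exhibit a non-vanishing imaginary nilpotent term. The only divergence is your literal treatment of $\overline{\eta^1\eta^2}$ as an independent quantity (producing the extra top-degree term), whereas the paper's computation \eqref{jcnkncknkrnkrcrrc} implicitly identifies it with $\eta^1\wedge\eta^2$; under either convention the obstruction $2i\sin\vp$ survives, so the conclusion is unaffected.
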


\begin{proof}
To prove this result we appeal to Proposition \ref{ejcnknckrncjrknr}. It was identified there and in the subsequent discussion that a global superparticle Lagrangian, compatible with the superconformal structure, exists if and only if the hermitian product of the defining cocycles for the superconformal structure is purely real. Then, to prove this proposition, we need only show here that this is not the case.   

Firstly, we know that any maximally superconformal supermanifold over $\Cbb\Pbb^1$ must be of the form $\Pi_{(-k_1, -k_2)}$, where $k_1 + k_2 = 2$. The defining cocycles for the superconformal structure on $\Pi_{(-k_1, -k_2)}$ is explicitly given in \eqref{jcnkrckrmckrmckr}. Let $G_{\Uc\Vc}$ denote the hermitian product of $g_{\Uc\Vc, 1}$ and $g_{\Uc\Vc, 2}$. Then we see that
\begin{align}
G_{\Uc\Vc} &= g_{\Uc\Vc, 1}\overline{g_{\Uc\Vc, 2}}\notag\\
&= \lam_1\overline{\lam_2} ~y^{k_1}\overline y^{k_2} \left( 1 + \left( k_2\overline y^{-1} -  k_1y^{-1}\right)\eta^1\wedge\eta^2\right).
\label{jcnkncknkrnkrcrrc}
\end{align}
Therefore, in order for $G_{\Uc\Vc}$ to be purely real, we necessarily require $k_1 = k_2$. Since $k_1 + k_2 = 2$, it follows that $k_1 = k_2 = 1$. But then, in writing $y = e^{i\vp}$ for a real variable $\vp$, we see that 
\begin{align}
\label{dcjnknckjfncjnfk}
G_{\Uc\Vc} = \lam_1\overline{\lam_2} (1 - 2i\sin\vp~\eta^1\wedge\eta^2).
\end{align}
Thus the imaginary part of $G_{\Uc\Vc}$ does not vanish identically, and as such obstructs the existence of a global superparticle Lagrangian compatible with the superconformal structure. 
\end{proof}

In spite of the above result we may nevertheless make sense of the notion of consistency for the component Lagrangian and characterise its locus of consistent fields.

\subsection{Consistent fields on $\Cbb\Pbb^{1|2}$} Consider the superparticle sigma-model on $\Cbb\Pbb^{1|2} = \Pi_{(-1, -1)}$ given by the superparticle Lagrangian $\Lc$ and the set of fields $\Fc$ as in \eqref{ecfcjhbhcjbdfc}. Recall that any field $\Phi$ may be written,
\begin{align}
\Phi = \phi + \psi_a\eta^a + F~\eta^1\wedge\eta^2.
\label{dkmclncrucrxe}
\end{align}
Then we have,
\begin{align}
\label{djnxkjnjkenenxs}
D_{\Vc, 1}\Phi = \psi_1 + (F - \phi^\p)\eta^2 + \psi^\p_1~\eta^1\wedge\eta^2
&&
\mbox{and}
&&
D_{\Vc, 2}\Phi = \psi_2 - (F +\phi^\p)\eta^1 - \psi_2^\p~\eta^1\wedge\eta^2.
\end{align}
In this way $\e^*D_a\Phi = \psi_a$ may be identified with a \emph{smooth} section of $\Oc(-1)$. We now have the following characterisation of the set of consistent fields. 

\begin{PROP}
\label{jnckncjknrkcnrkxe}
The locus $\Fc_L\cap \Gc_\Lc$ of good, consistent fields for $L$ is given by 
\[
\Fc_L \cap\Gc_\Lc \cong \left\{
\Phi\in \Gc_\Lc\mid \mbox{$(\e^*D_1\Phi, \e^*D_2\Phi)$ are real and imaginary (resp.), or vice-versa}
\right\}.
\]
\end{PROP}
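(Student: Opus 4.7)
The strategy is to specialize the cochain identity derived in the proof of Proposition~\ref{ejcnknckrncjrknr} to $\Xfr = \Cbb\Pbb^{1|2} = \Pi_{(-1,-1)}$, perform the Berezin integration, and then extract the consistency condition along the lines of Proposition~\ref{djncjklnrjcmrjcmlr}.

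I would begin by recalling the identity
\[
(\rho_{\Uc\Vc})_*\Lc_\Uc(\Phi) = \Re_{\Uc\Vc}\,\Lc_\Vc(\Phi) + i\,\Im_{\Uc\Vc}\,\mathcal{M}_\Vc(\Phi)
\]
from the proof of Proposition~\ref{ejcnknckrncjrknr}, and substituting the explicit cocycle \eqref{dcjnknckjfncjnfk} on $\Cbb\Pbb^{1|2}$. Writing $\lam_1\overline{\lam_2} = a+ib$ with $a,b\in\Rbb$ one extracts
\[
\Re_{\Uc\Vc} = a + 2b\sin\vp\,\eta^1\wedge\eta^2, \qquad \Im_{\Uc\Vc} = b - 2a\sin\vp\,\eta^1\wedge\eta^2.
\]
Applying Berezin integration to both sides, using the components \eqref{djnxkjnjkenenxs} of $D_{\Vc,a}\Phi$ on the right and the good-field hypothesis (Definition~\ref{rkclrcklrmclkr}) on the left, produces a relation of the form
\[
f_{VU}^*L_U(\Phi) = a\,L_V(\Phi) + ib\,[m_\Vc(\Phi)]_{12}\,\drm y + 2\sin\vp\bigl(ib\,\e^*\ell_\Vc(\Phi) - ia\,\e^*m_\Vc(\Phi)\bigr)\drm y,
\]
in which the bodies evaluate to $\e^*\ell_\Vc(\Phi) = i\Im(\psi_1\overline{\psi_2})$ and $\e^*m_\Vc(\Phi) = \Re(\psi_1\overline{\psi_2})$.

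I would then apply the variational derivative $\dt_\Fc$ and impose the proportionality $\dt_\Fc L_U(\Phi) \propto \dt_\Fc L_V(\Phi)$ on $U\cap V \cong \Cbb^\times$. The crucial observation is that $\sin\vp$ varies non-trivially on the overlap while $\dt_\Fc L_V(\Phi)$ has no such angular dependence; consequently the $\sin\vp$-coefficient of the differentiated anomaly must decouple from $\dt_\Fc L_V$ and vanish at $\Phi$. Decomposing $\psi_a = \psi_a^R + i\psi_a^I$, the resulting system of real equations for $\dt_\Fc\bigl(b\Im(\psi_1\overline{\psi_2}) - a\Re(\psi_1\overline{\psi_2})\bigr)(\Phi) = 0$ forces exactly the reality/imaginary structure on $(\psi_1,\psi_2)$ asserted in the proposition.

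The main obstacle will be the non-angular anomaly $ib\,[m_\Vc(\Phi)]_{12}\,\drm y$, whose variational derivative is not a priori proportional to $\dt_\Fc L_V(\Phi)$. To handle it one would compute $[m_\Vc(\Phi)]_{12}$ explicitly from \eqref{djnxkjnjkenenxs} and compare its Euler--Lagrange expressions with those of $L_V(\Phi)$. The verification that, under the reality constraint on $(\psi_1,\psi_2)$, the residual $(\phi,F)$-sector discrepancy can be absorbed into the proportionality constant $c(y)$ in the relation $\dt_\Fc L_U(\Phi) = c(y)\,\dt_\Fc L_V(\Phi)$ is where the specific structure of the superparticle Lagrangian on $\Pi_{(-1,-1)}$ becomes essential, and is the most delicate step of the argument.
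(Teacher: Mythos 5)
Your skeleton is the paper's: specialise \eqref{jcnkncjknckjncke} to $\Pi_{(-1,-1)}$ via the explicit cocycle \eqref{dcjnknckjfncjnfk}, Berezin-integrate along good fields, and demand that the variation of the $\sin\vp$-dependent anomaly vanish. But there is a genuine gap in the execution, and it sits exactly at the two points you yourself flag as delicate. The paper normalises $\lam_1=\lam_2=i$ (legitimate since the split models $\Pi_{(-k_1,-k_2)}(\lam)$ are isomorphic for all invertible diagonal $\lam$), so that $\lam_1\overline{\lam_2}=1$, i.e.\ $a=1$, $b=0$ in your notation. This kills both of your problem terms at once: the non-angular anomaly $ib\,[m_\Vc(\Phi)]_{12}\,\drm y$ and the $\e^*\ell_\Vc$ contribution disappear, the identity collapses to $L_U(\Phi)-L_V(\Phi)=-2i\sin\vp\,(\e^*m_\Vc(\Phi))\,\drm y$, and consistency becomes $\dt_\Fc(\e^*m_\Vc(\Phi))=0$, i.e.\ $\dt_\Fc\Re(\psi_1\overline{\psi_2})=0$, which yields the stated locus.

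For general $a+ib$, by contrast, your two key claims fail. First, the $\sin\vp$-coefficient condition $\dt_\Fc\bigl(b\,\Im(\psi_1\overline{\psi_2})-a\,\Re(\psi_1\overline{\psi_2})\bigr)=0$ is equivalent to the vanishing of $\Re\bigl((a+ib)\,\psi_1\overline{\psi_2}\bigr)$, a \emph{rotated} condition that reduces to the proposition's real/imaginary dichotomy only when $b=0$; for $a=0$, $b=1$ it would instead force $\psi_1\overline{\psi_2}$ to be real. Corollary \ref{rjcnkrnjckrncjkr} only imposes $\det\lam=-1$, hence $|\lam_1\overline{\lam_2}|=1$, not $b=0$, so the reduction to $b=0$ is a choice of trivialisation you must make explicitly rather than something the constraints supply. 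Second, the residual anomaly $ib\,[m_\Vc(\Phi)]_{12}\,\drm y$ cannot in general be ``absorbed into the proportionality constant $c(y)$'': the top component $[m_\Vc(\Phi)]_{12}$ contains $\overline{\psi_a}\psi_b^\p$ and $F\overline{\phi}^\p$ terms whose Euler--Lagrange expressions are not proportional to $\dt_\Fc L_V(\Phi)$ --- this is precisely the source of the extra necessary conditions that appear in the non-standard case (Proposition \ref{dicdcnnckelcekm}). So either fix $\lam_1\overline{\lam_2}=1$ at the outset, as the paper does, or your argument establishes a different, trivialisation-dependent statement rather than the proposition as written.
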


\begin{proof}
Let $\Lc$ be the superparticle Lagrangian for $\Pi_{(-1, -1)} = \Cbb\Pbb^{1|2}$. From \eqref{jcnkncjknckjncke} and \eqref{dcjnknckjfncjnfk} we see, on the intersection $\Uc\cap\Vc$, that $\Lc_\Uc$ and $\Lc_\Vc$ differ as follows
\begin{align}
(\rho_{\Uc\Vc})_*\Lc_\Uc - \Lc_\Vc = -2i\sin\vp~\eta^{12}\Mcl_\Vc. 
\label{cjnkencknckenck}
\end{align}
We have set $\lam_1 = \lam_2 = i$ without loss of generality. Along good fields $\Phi$, the Berezin integral applied to \eqref{cjnkencknckenck} gives the corresponding difference of $L_U(\Phi)$ and $L_V(\Phi)$ on $U\cap V$,
\begin{align}
L_U(\Phi) - L_V(\Phi) &= - 2i\sin\vp \int_\Ber\eta^{12}\Mcl_\Vc(\Phi)\notag \\
&= - 2i\sin\vp~(\e^*m_\Vc(\Phi))~\drm y,
\label{jekjncjkenckjeckec}
\end{align}
where $\Mcl_\Vc(\Phi) = m_\Vc(\Phi)~[\drm Y~\drm\eta^1\drm\eta^2]$ and, from \eqref{jcnkrnckrmckjmrkjr}, 
\begin{align}
m_\Vc(\Phi) &= \frac{1}{2}s^{ab}\left\langle D_{\Vc,a}(\Phi), D_{\Vc, b}(\Phi)\right\rangle
\notag 
\\
&= \frac{1}{2} \left( \left\langle D_{\Vc,1}(\Phi), D_{\Vc, 2}(\Phi)\right\rangle + \left\langle D_{\Vc,2}(\Phi), D_{\Vc, 1}(\Phi)\right\rangle\right).
\label{fjnckcmkmcklrcr}
\end{align}
Now recall from Definition \ref{jbckncjkcnkr} that, for $L$ to be consistent, the infinitesimal variations $\dt L_U$ and $\dt L_V$ must be global. As such we see from \eqref{jekjncjkenckjeckec} that the component Lagrangian will be consistent along $\Phi$ iff $\dt_\Fc(\e^*m_\Vc(\Phi)) = 0$. We have,
\[
\e^*m_\Vc(\Phi)  = \psi_1\overline{\psi_2} + \overline{\psi_1}\psi_2
\]
which yields the following,
\[
\dt_\Fc (\e^*m_\Vc(\Phi)) = \left( \psi_1~\dt_\Fc\overline{\psi_2} + \overline{\psi_1}~\dt_\Fc\psi_2\right) + 
 \left( \psi_2~\dt_\Fc\overline{\psi_1} + \overline{\psi_2}~\dt_\Fc\psi_1\right).
\]
The above variation now vanishes if and only if $\psi_1$ and $\psi_2$ are respectively real and imaginary, or vice-versa. 
\end{proof}

In the background of a non-standard $\Cbb\Pbb^{1|2}$, i.e., some $\Pi_{(-k_1, -k_2)}$, for $(k_1, k_2)\neq (1, 1)$, we can no longer appeal to the description of $G_{\Uc\Vc}$ in \eqref{dcjnknckjfncjnfk}, and must instead resort to the more complicated expression in \eqref{jcnkncknkrnkrcrrc}. Interestingly, this results in quite a different characterisation of the consistent fields than that obtained in Proposition \ref{jnckncjknrkcnrkxe}. 

\subsection{Consistent fields on a non-Standard $\Cbb\Pbb^{1|2}$}
We start off with the following lemma, elaborating on the calculation of the hermitian product $G_{\Uc\Vc}$ of the cocycles defining the superconformal structure.

\begin{LEM}\label{jfcnkkjnjkrvrles}
Let $\Re_{\Uc\Vc}$ and $\Im_{\Uc\Vc}$ denote the real and imaginary parts, respectively, of the hermitian product $G_{\Uc\Vc}$ of the defining cocycles for the superconformal structure of $\Pi_{(-k_1, -k_2)}(\lam)$. Suppose now that $k_1\neq k_2$. Then we can write,
\begin{align*}
\Re_{\Uc\Vc}(Y, \eta) = R_{UV, 0}(y) + R_{UV, 12}(y)~\eta^{12}
&&
\mbox{and}
&&
\Im_{UV} (Y, \eta) = I_{UV, 0}(y) + I_{UV, 12}(y)~\eta^{12}
\end{align*}
for non-vanishing, smooth functions $R_{UV, 0}, R_{UV, 12}, I_{UV, 0}$ and $I_{UV, 12}$ on $U\cap V$. 
\end{LEM}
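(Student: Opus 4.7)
The plan is to start from the explicit formula \eqref{jcnkncknkrnkrcrrc} for $G_{\Uc\Vc}$ and separate it, first by $\eta$-degree and then into real and imaginary parts. Writing
\[
G_{\Uc\Vc} = A(y,\overline y) + B(y,\overline y)~\eta^{12},
\]
I read off directly
\[
A = \lam_1\overline{\lam_2}\, y^{k_1}\overline y^{k_2}, \qquad B = \lam_1\overline{\lam_2}\bigl(k_2 y^{k_1}\overline y^{k_2-1} - k_1 y^{k_1-1}\overline y^{k_2}\bigr).
\]
Since all four functions $R_{UV,0}, R_{UV,12}, I_{UV,0}, I_{UV,12}$ are to be defined on $U\cap V \cong \Cbb^\times$, the natural move is to introduce polar coordinates $y = re^{i\vp}$ on $U\cap V$ and also write the (nonzero) constant as $\lam_1\overline{\lam_2} = \mu e^{i\q}$ with $\mu > 0$. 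The claim of smoothness is then automatic from these expressions, so the only content of the lemma is the non-vanishing of the four coefficient functions when $k_1 \neq k_2$.

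For the degree-zero part, the computation gives
\[
A = \mu\, r^{k_1+k_2}\, e^{i((k_1-k_2)\vp + \q)},
\]
so that
\[
R_{UV,0} = \mu\, r^{k_1+k_2}\cos((k_1-k_2)\vp + \q), \qquad I_{UV,0} = \mu\, r^{k_1+k_2}\sin((k_1-k_2)\vp + \q).
\]
Under the hypothesis $k_1 \neq k_2$ the argument of the trigonometric functions is a non-constant function of $\vp$, so neither $R_{UV,0}$ nor $I_{UV,0}$ is identically zero on $U\cap V$.

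For the $\eta^{12}$-coefficient I would similarly compute
\[
B = \mu\, r^{k_1+k_2-1}\bigl[k_2\, e^{i((k_1-k_2+1)\vp+\q)} - k_1\, e^{i((k_1-k_2-1)\vp+\q)}\bigr],
\]
whose real and imaginary parts give, respectively, $R_{UV,12}$ and $I_{UV,12}$. The only potential obstacle is ruling out identical cancellation of these two terms. This is the step where the hypothesis $k_1 \neq k_2$ must genuinely be used: the two frequencies $(k_1-k_2+1)$ and $(k_1-k_2-1)$ differ by $2$ and never agree (nor do they coincide up to a sign of the coefficient, since that would force $k_1 = k_2$). Hence the two complex exponentials are linearly independent as functions of $\vp$, and neither the real nor the imaginary part of $B$ can vanish identically on $U\cap V$. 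This yields the four promised non-vanishing smooth decompositions, completing the proof.
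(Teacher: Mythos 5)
Your computation follows the same route as the paper's proof: expand $G_{\Uc\Vc}$ from \eqref{jcnkncknkrnkrcrrc} in polar coordinates and check that the four coefficient functions are not identically zero. (The paper normalises $r=1$, $\lam_1=\lam_2=i$ and substitutes $k_2=2-k_1$ before separating into sines and cosines, but the content is the same.) The degree-zero part of your argument is fine. The $\eta^{12}$ part, however, has a gap at precisely the step you single out. Writing $m=k_1-k_2+1$ and $n=k_1-k_2-1$, the criterion you need is not that $e^{im\vp}$ and $e^{in\vp}$ are linearly independent over $\Cbb$ (which only requires $m\neq n$), but that
\[
k_2\cos(m\vp+\q)-k_1\cos(n\vp+\q)
=\tfrac12\left(k_2e^{i\q}e^{im\vp}+k_2e^{-i\q}e^{-im\vp}-k_1e^{i\q}e^{in\vp}-k_1e^{-i\q}e^{-in\vp}\right)
\]
(and likewise the imaginary part) cannot vanish identically. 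For that you need the four frequencies $\pm m,\pm n$ to be pairwise distinct, i.e.\ $m\neq\pm n$ \emph{and} $m\neq 0\neq n$, together with the fact that $k_1$ and $k_2$ do not both vanish. You establish $m\neq\pm n$ from $k_1\neq k_2$, but you never rule out $m=0$ or $n=0$, and the conclusion genuinely fails there: for $k_1=0$, $k_2=1$ and $\lam_1\overline{\lam_2}$ purely imaginary one gets $m=0$ and $B$ equal to a purely imaginary constant, so $R_{UV,12}\equiv 0$ even though $k_1\neq k_2$. The missing ingredient is the constraint $k_1+k_2=2$, which is forced by the existence of the maximal superconformal structure on $\Pi_{(-k_1,-k_2)}(\lam)$ (Corollary \ref{rjcnkrnjckrncjkr}) and is invoked explicitly in the paper's proof; it makes $k_1-k_2$ even, hence $m$ and $n$ odd and nonzero, and excludes $k_1=k_2=0$. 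Adding that one observation closes the gap.
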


\begin{proof}
Recall the expression for $G_{\Uc\Vc}$ from \eqref{jcnkncknkrnkrcrrc}. For simplicity, write $y = e^{i\vp}$ for a real variable $\vp$. Without loss of generality we take $\lam_1 = \lam_2 = i$. Since $\Pi_{(-k_1, -k_2)}$ admits a maximally superconformal structure, we have $k_1 + k_2 = 2$. Using this we see that \eqref{jcnkncknkrnkrcrrc} becomes,
\begin{align}
G_{\Uc\Vc} = e^{2i(k_1 - 1)\vp} (1 - 2k_1\cos \vp~\eta^{12}) - 2e^{i(2k_1-3)\vp}\eta^{12}. 
\label{cjnknckrnckjr}
\end{align}
As a consistency check note that we recover \eqref{dcjnknckjfncjnfk} for $k_1 = 1$. Now set $\tilde k_1 := k_1 - 1$. Then from \eqref{cjnknckrnckjr} and various trigonometric identities, the real and imaginary parts are given by
\begin{align}
\Re_{\Uc\Vc} &= \cos (2\tilde k_1\vp) - 2\left( \tilde k_1(\cos\vp)(\cos (2\tilde k_1\vp)) - (\sin \vp)(\sin(2\tilde k_1\vp))\right)\eta^{12}
\label{djcnkncjkrncrnrn}\\
\Im_{\Uc\Vc} &= \sin(2\tilde k_1\vp) - 2\left(
\tilde k_1(\cos\vp)(\sin(2\tilde k_1\vp)) + (\sin \vp)(\cos(2\tilde k_1\vp))\right) \eta^{12}.
\label{dkcmmclkmcklrcr}
\end{align}
As $k_1\neq1$, we see that $\tilde k_1\neq0$.  This completes the proof.
\end{proof}

As a result of Lemma \ref{jfcnkkjnjkrvrles} and \eqref{jcnkncjknckjncke} we see that the difference of the component Lagrangians on $U\cap V$ (along good fields) is,
\begin{align}
L_U - R_{UV,0}L_V =  \left( R_{UV, 12}~\e^*\ell_\Vc + iI_{UV, 12}~\e^*m_\Vc + iI_{UV,0} ~m_V\right)\drm y
\label{kcmlkfmcklrmclk}
\end{align}
where $\Lc_\Vc = \ell_\Vc[\drm y~\drm\eta^1\drm\eta^2]$ and $m_V\drm y = \int_\Ber \Mcl_\Vc$. Hence, in order for $L$ to be consistent along $\Phi$ we need the variation on the right-hand side of \eqref{kcmlkfmcklrmclk} to vanish along $\Phi$. Firstly, from \eqref{djnxkjnjkenenxs}, we have
\begin{align*}
\e^*\ell_\Vc &= \psi_1\overline{\psi_2} - \overline{\psi_1}\psi_2\\
\e^*m_\Vc &= \psi_1\overline{\psi_2} + \overline{\psi_1}\psi_2\\
M_V &=\frac{1}{2} \left( \overline{\psi_2}\psi_1^\p - \overline{\psi_1}\psi_2^\p + \psi_2\overline{\psi^\p_1} - \psi_1\overline{\psi_2^\p}\right)
 + \left( F\overline{\phi}^\p - \phi^\p\overline F\right).
\end{align*}
The variations of the above quantities are then
\begin{align*}
\dt_\Fc(\e^*\ell_\Vc) &= \left( \overline{\psi_2}~\dt_\Fc\psi_1 -\psi_2~\dt_\Fc\overline{\psi_1} \right) - \left( \overline{\psi_1}~\dt_\Fc\psi_2 -\psi_1~\dt_\Fc \overline{\psi_2}\right)
\\
\dt_\Fc(\e^*m_\Vc) &= \left(\overline{\psi_2}~\dt_\Fc\psi_1 +  \psi_2~\dt_\Fc\overline{\psi_1} \right) + \left(  \overline{\psi_1}~\dt_\Fc\psi_2 + \psi_1~\dt_\Fc\overline{\psi_2} \right) 
\\
\dt_\Fc M_V &= - \left( \overline{\psi_2}^\p~\dt_\Fc \psi_1 + \psi_2^\p~\dt_\Fc\overline{\psi_1}\right) + 
\left( \overline{\psi_1}^\p~\dt_\Fc \psi_2 + \psi_1^\p~\dt_\Fc \overline{\psi_1}\right) +  \dt_\Fc \left( F\overline{\phi}^\p - \phi^\p\overline F\right).
\end{align*}
Clearly, we see that a \emph{necessary} condition for $L$ to be consistent is the vanishing of $ \dt_\Fc ( F\overline{\phi}^\p - \phi^\p\overline F)$. Computing this we have,
\[
\dt_\Fc \left( F\overline{\phi}^\p - \phi^\p\overline F\right) 
= 
\left( \overline\phi^\p \dt_\Fc F - \phi^\p \dt_\Fc \overline F\right)
+
\left( 
\overline F^\p\dt_\Fc \phi - F^\p\dt_\Fc\overline \phi \right).
\]
The above variation now vanishes iff both $\phi^\p$ and $F$ are either purely real or imaginary. Now write $F = \e^*D_1D_2\Phi$ and $\phi^\p = \e^*D^2\Phi$. These quantities may be identified with smooth sections of the tangent bundle. We then have the following.

\begin{PROP}\label{dicdcnnckelcekm}
Suppose $\Phi$ is a good field for $\Lc$ and $L$ is consistent along $\Phi$. Then both $\e^*D^2\Phi$ and $\e^*D_1D_2\Phi$ are either purely real or imaginary. \qed
\end{PROP}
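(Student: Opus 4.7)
The plan is to complete the computation begun in the paragraph just preceding the proposition, by carefully isolating the portion of the consistency condition that involves only the component fields $\phi$ and $F$ (and not the $\psi_a$).

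First, I would invoke Lemma \ref{jfcnkkjnjkrvrles} together with the goodness of $\Phi$ to rewrite $L_U - R_{UV,0} L_V$ on $U\cap V$ in the form of equation \eqref{kcmlkfmcklrmclk}, namely as a combination of $\e^*\ell_\Vc$, $\e^*m_\Vc$ and $M_V$ with coefficients $R_{UV,12}$, $iI_{UV,12}$, and $iI_{UV,0}$. Crucially, in the non-standard case $k_1 \neq k_2$ these coefficient functions are nonvanishing by Lemma \ref{jfcnkkjnjkrvrles}, so no cancellation between the three terms can be forced by the coefficients themselves. For $L$ to be consistent along $\Phi$ we require the variational derivative of this combination to vanish.

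Next, I would exploit the observation already made in the text that the three explicit expansions of $\e^*\ell_\Vc$, $\e^*m_\Vc$ and $M_V$ split neatly between two groups of independent component fields: $\e^*\ell_\Vc$ and $\e^*m_\Vc$ depend only on $\psi_a$ and $\overline{\psi_a}$, whereas $M_V$ is the only term containing $\phi^\p$, $\overline\phi^\p$, $F$ and $\overline F$. Since the component fields $\phi, \psi_a, F$ of $\Phi$ are independent, the variations $\dt_\Fc\phi$, $\dt_\Fc\psi_a$, $\dt_\Fc F$ (and their conjugates) are likewise independent, so the portion of $\dt_\Fc M_V$ carrying $\dt_\Fc F$, $\dt_\Fc \overline F$, $\dt_\Fc\phi$, $\dt_\Fc\overline\phi$ must vanish on its own. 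After integration by parts within $M_V$, this isolated piece is precisely
\[
\dt_\Fc(F\overline\phi^\p - \phi^\p\overline F) = (\overline\phi^\p\,\dt_\Fc F - \phi^\p\,\dt_\Fc \overline F) + (\overline F^\p\,\dt_\Fc \phi - F^\p\,\dt_\Fc \overline\phi).
\]

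Finally, I would argue that vanishing of this expression for arbitrary independent variations $\dt_\Fc F,\dt_\Fc\overline F,\dt_\Fc\phi,\dt_\Fc\overline\phi$ forces $\overline\phi^\p = \pm \phi^\p$ and $\overline F = \pm F$, i.e.\ both $\phi^\p$ and $F$ are each purely real or purely imaginary. Identifying $F = \e^*D_1D_2\Phi$ and $\phi^\p = \e^*D^2\Phi$, as in the discussion preceding the proposition, then gives the stated conclusion.

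The main obstacle I anticipate is the separation-of-variables step: one must justify that the $M_V$-contribution cannot be cancelled against variations coming from $\e^*\ell_\Vc$ and $\e^*m_\Vc$. This hinges on the independence of the component fields of $\Phi$, together with the nonvanishing of the coefficients in Lemma \ref{jfcnkkjnjkrvrles}; granted these, the argument reduces to linear algebra in the space of admissible variations.
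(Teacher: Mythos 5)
Your proposal is correct and follows essentially the same route as the paper: the paper states this proposition with no separate proof precisely because the preceding discussion already derives \eqref{kcmlkfmcklrmclk}, observes that $M_V$ is the only term involving $\phi$ and $F$, and extracts the vanishing of $\dt_\Fc(F\overline\phi^\p - \phi^\p\overline F)$ as a necessary condition. Your only addition is to make explicit the ``clearly'' step --- that the nonvanishing of $I_{UV,0}$ (Lemma \ref{jfcnkkjnjkrvrles}) and the independence of the component fields prevent cancellation against the $\psi_a$-terms --- which is a faithful filling-in rather than a different argument.
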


Another way to phrase the content in Proposition \ref{dicdcnnckelcekm} is to write,
\begin{align}
\Fc_L\cap \Gc_\Lc
\subset
\left\{
\Phi\in\Gc_\Lc \mid  \mbox{$(\e^*D^2\Phi,\e^*D_1D_2\Phi)$ are both either purely real or imaginary}
\right\}.
\label{fjcnkfncjkfncjkfnck}
\end{align}
Hence the conditions on $\Phi$ in Proposition \ref{dicdcnnckelcekm} are necessary, but need not be sufficient. This is in contrast with Proposition \ref{jnckncjknrkcnrkxe} where the locus $\Fc_L\cap \Gc_\Lc$ was fully characterised. We consider in more detail the case where $\tilde k_1 = 1$, which is the supermanifold $\Pi_{(-2, 0)}$. Our goal is to give a more complete description of $\Fc_L\cap \Gc_\Lc$ than that obtained in \eqref{fjcnkfncjkfncjkfnck}.

\subsubsection{The Supermanifold $\Pi_{(-2, 0)}$} Note firstly that \eqref{djcnkncjkrncrnrn} and \eqref{dkcmmclkmcklrcr} become
\begin{align*}
\Re_{\Uc\Vc} = \cos 2\vp - 2\cos 3\vp~\eta^{12} 
&&
\mbox{and}
&&
\Im_{\Uc\Vc} = \sin 2\vp - 2\sin 3\vp~\eta^{12}. 
\end{align*}
Assuming the necessary conditions in Proposition \ref{dicdcnnckelcekm}, the variation on the right-hand side of \eqref{kcmlkfmcklrmclk} becomes
\begin{align}
\label{djcnkncjkrnckjrnk}
\dt_\Fc (L_U - R_{UV,0}L_V)=&~ - \left( 2e^{3i\vp}\overline{\psi_2} + i\sin 2\vp~\overline{\psi_2}^\p\right)\dt_\Fc\psi_1+ \left(2e^{-3i\vp}\psi_2 - i\sin 2\vp~\psi_2^\p\right)\dt_\Fc\overline{\psi_1}\\ \notag
&~+ \left( 2e^{-3i\vp}\overline{\psi_1} + i\sin 2\vp~\overline{\psi_1}^\p\right)\dt_\Fc\psi_2 - \left( 2e^{3i\vp}\psi_1 - i\sin 2\vp~\psi_1^\p\right)\dt_\Fc\overline{\psi_2}.
\end{align}
We are interested in identifying necessary and sufficient conditions for the above variation to vanish, thereby ensuring consistency of $L$ along $\Phi$. To that extent, define the following first-order differential operator,
\[
\Delta_{(m, n)}(\vp) := 2 e^{-mi\vp} + e^{-i\vp}\sin n\vp~\frac{\pt}{\pt \vp}.
\]
We then have the following:

\begin{PROP}
\label{djcnknckecnjkekce}
Suppose $\Phi$ is such that it satisfies the necessary conditions in Proposition \ref{dicdcnnckelcekm} and
\begin{align*}
 \overline{\Delta_{(3, 2)}(\vp)}\left( \e^*D_1\Phi\right) = 0  && \mbox{and} &&\overline{\Delta_{(-3, 2)}(\vp)}\left( \e^*D_2\Phi\right) = 0.
\end{align*}
Then the component Lagrangian $L$ will be consistent along $\Phi$. 
\end{PROP}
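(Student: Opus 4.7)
The plan is to show directly that, under the stated hypotheses, the right-hand side of \eqref{djcnkncjkrnckjrnk} vanishes identically along $\Phi$. Once this is established, since $R_{UV,0}$ is nowhere vanishing by Lemma \ref{jfcnkkjnjkrvrles}, we obtain on $U\cap V$ that $\dt_\Fc L_U(\Phi) = R_{UV,0}\,\dt_\Fc L_V(\Phi)$, whence $\dt_\Fc L_U(\Phi)=0$ iff $\dt_\Fc L_V(\Phi)=0$, which is exactly consistency of $L$ along $\Phi$ in the sense of Definition \ref{jbckncjkcnkr}.

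First I would treat $\dt_\Fc\psi_1,\dt_\Fc\overline{\psi_1},\dt_\Fc\psi_2,\dt_\Fc\overline{\psi_2}$ as independent variations, so that the vanishing of the right-hand side of \eqref{djcnkncjkrnckjrnk} reduces to the simultaneous vanishing of its four coefficients. A short inspection shows these coefficients pair up into complex-conjugate pairs: since $\vp$ is real and each coefficient has the form $a\psi+b\psi^\p$ with $a,b$ built from $e^{\pm 3i\vp}$ and $\sin 2\vp$, the coefficient of $\dt_\Fc\psi_1$ is (up to an overall sign) the complex conjugate of that of $\dt_\Fc\overline{\psi_1}$, and likewise for the $\psi_2$ pair. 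Only two independent scalar conditions therefore remain, one for $\psi_1=\e^*D_1\Phi$ and one for $\psi_2=\e^*D_2\Phi$.

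The second step is the main algebraic point: to recognise these two remaining conditions as precisely the vanishing of $\overline{\Delta_{(3,2)}(\vp)}(\e^*D_1\Phi)$ and $\overline{\Delta_{(-3,2)}(\vp)}(\e^*D_2\Phi)$. Interpreting $\psi_a^\p$ through the chain rule in the local parameter $\vp$ on $U\cap V$, this amounts to exhibiting $2e^{3i\vp}\psi_1 - i\sin 2\vp\,\psi_1^\p$ and $2e^{-3i\vp}\psi_2 - i\sin 2\vp\,\psi_2^\p$ as nowhere-vanishing scalar multiples of $\overline{\Delta_{(3,2)}(\vp)}\psi_1$ and $\overline{\Delta_{(-3,2)}(\vp)}\psi_2$ respectively. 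The specific exponents $\pm 3$ and the trigonometric factor $\sin 2\vp$ appearing in $\Delta_{(\pm 3,2)}$ are precisely the data dictated by the real and imaginary parts of $G_{\Uc\Vc}$ that were computed in Lemma \ref{jfcnkkjnjkrvrles} and which drive \eqref{djcnkncjkrnckjrnk}, so the identification is structurally forced rather than accidental.

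I expect the principal obstacle to be not the structural logic but the careful algebraic bookkeeping in this second step: one must track the interplay between conjugation of the operator coefficients, conjugation of the sections $\psi_a$, and the passage between $y$-derivatives and $\vp$-derivatives on $U\cap V\cong \Cbb^\times$ when viewing $\psi_a$ as a local function parametrised by $\vp$. Modulo this bookkeeping, the argument then closes cleanly: the hypothesis kills the $\psi_2$-coefficient of $\dt_\Fc\overline{\psi_1}$ and the $\psi_1$-coefficient of $\dt_\Fc\overline{\psi_2}$ directly; the $\dt_\Fc\psi_1$ and $\dt_\Fc\psi_2$ coefficients vanish by complex conjugation; and the previously-assumed conditions of Proposition \ref{dicdcnnckelcekm} have already eliminated the $F,\phi$ contributions before \eqref{djcnkncjkrnckjrnk} is even reached, so no further work is needed on those sectors.
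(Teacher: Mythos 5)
Your proposal follows essentially the same route as the paper: the paper's proof is exactly the rewriting of the four coefficients in \eqref{djcnkncjkrnckjrnk} as $\pm\overline{\Delta_{(\mp 3,2)}(\vp)}\psi_a$ and their complex conjugates, after which the two hypotheses kill all four terms and consistency follows from Definition \ref{jbckncjkcnkr} since $R_{UV,0}$ is non-vanishing. The conjugation pairing and the identification of the coefficients with the $\Delta$-operators that you describe is precisely the content of the paper's displayed computation, so the proposal is correct and matches the paper's argument.
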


\begin{proof}
Note from \eqref{djcnkncjkrnckjrnk} that we can write
\begin{align*}
\dt_\Fc (L_U - R_{UV,0}L_V) =&~-  \Delta_{(-3, 2)}(\vp) \overline{\psi_2}~\dt_\Fc\psi_1 
+ 
\overline{\Delta_{(-3, 2)}(\vp)}\psi_1~ \dt_\Fc\overline{\psi_1}\\
&~+
 \Delta_{(3, 2)}(\vp)\overline{\psi_2}~\dt_\Fc\psi_2 - \overline{\Delta_{(3, 2)}(\vp)}\psi_2~\dt_\Fc\overline{\psi_2}\\
 =&~- \overline{\left(\overline{\Delta_{(-3, 2)}(\vp) }\psi_2\right)}\dt_\Fc\psi_1 
+ 
\left(\overline{\Delta_{(-3, 2)}(\vp)}\psi_1\right) \dt_\Fc\overline{\psi_1}\\
&~+
\overline{\left( \overline{\Delta_{(3, 2)}(\vp)}\psi_2\right)}~\dt_\Fc\psi_2 - \left(\overline{\Delta_{(3, 2)}(\vp)}\psi_2\right)\dt_\Fc\overline{\psi_2}.
\end{align*}
The proposition now follows. 
\end{proof}

Hence Proposition \ref{djcnknckecnjkekce} above, coupled with Proposition \ref{dicdcnnckelcekm}, describe necessary and sufficient conditions for $L$ to be consistent along $\Phi$, thereby providing a complete characterisation of the locus $\Fc_L\cap\Gc_\Lc$ for $\Pi_{(-2, 0)}$. Under further mild assumptions on $\Phi$ we can obtain another characterisation of this locus as follows. Suppose firstly that $\Phi$ satisfies the conditions in Proposition \ref{dicdcnnckelcekm} and moreover that $\e^*D_1\Phi$ and $\e^*D_2\Phi$ are both either purely real or purely imaginary. Then we say $\Phi$ has type $(\Re, \Re)$ if both $\e^*D_1\Phi$ and $\e^*D_2\Phi$ are purely real; and type $(\Re, \Im)$ if $\e^*D_1\Phi$ and $\e^*D_2\Phi$ are purely real and imaginary respectively, and so on. In this notation we have the following.

\begin{PROP}
Suppose $\Phi$ is of type $(\Re, \Re)$ or $(\Im, \Re)$. Then $L$ if consistent if and only if $\e^*D_1\Phi = 0$. Similarly, if $\Phi$ is of the other types, i.e., $(\Re, \Im)$ or $(\Im, \Im)$, then $L$ is consistent if and only if $\e^*D_2\Phi = 0$.
\end{PROP}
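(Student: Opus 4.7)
My approach is to substitute each of the four type assumptions into the obstruction expression \eqref{djcnkncjkrnckjrnk}, whose vanishing along $\Phi$ characterises consistency of $L$ along $\Phi$, and simplify. The four types split naturally into two regimes based on whether the derivative terms $i\sin 2\vp\,\psi_a^\p$ cancel (types $(\Im, \Re)$ and $(\Re, \Im)$) or persist (types $(\Re, \Re)$ and $(\Im, \Im)$).

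For type $(\Im, \Re)$, the plan is to substitute $\overline{\psi_1} = -\psi_1$, $\overline{\psi_2} = \psi_2$, together with the corresponding sign-flips on variations and derivatives, and to observe that when one combines the coefficients of $\dt_\Fc\psi_1$ (coming from the $\dt_\Fc\psi_1$ and $\dt_\Fc\overline{\psi_1}$ terms of \eqref{djcnkncjkrnckjrnk}), the $i\sin 2\vp\,\psi_2^\p$ contributions cancel in pairs; what survives is the purely algebraic expression
\[
\dt_\Fc(L_U - R_{UV,0}L_V) = -4\cos 3\vp\,\bigl(\psi_2\,\dt_\Fc\psi_1 + \psi_1\,\dt_\Fc\psi_2\bigr).
\]
Vanishing of these coefficients, combined with the fact that $\cos 3\vp$ has only isolated zeros on $U\cap V$ and the $\psi_a$ are smooth, yields $\e^*D_1\Phi = \psi_1 = 0$. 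Type $(\Re, \Im)$ is the symmetric calculation and yields $\psi_2 = 0$.

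For type $(\Re, \Re)$, the substitution $\overline{\psi_a} = \psi_a$ does not produce this cancellation; the coefficient of $\dt_\Fc\psi_2$ takes the form $2i(\sin 2\vp\,\psi_1^\p - 2\sin 3\vp\,\psi_1)$, which by Proposition \ref{djcnknckecnjkekce} is precisely the condition $\overline{\Delta_{(3,2)}(\vp)}\psi_1 = 0$. Under the reality of $\psi_1$ (and hence of $\psi_1^{,\vp}$), I would separate this complex first-order relation into its real and imaginary parts, obtaining two real equations $2\cos 3\vp\,\psi_1 + \cos\vp\sin 2\vp\,\psi_1^{,\vp} = 0$ and $2\sin 3\vp\,\psi_1 + \sin\vp\sin 2\vp\,\psi_1^{,\vp} = 0$. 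Eliminating $\psi_1^{,\vp}$ forces $\sin\vp\cos 3\vp = \cos\vp\sin 3\vp$, i.e.\ $\sin 2\vp \equiv 0$, which fails generically; hence $\psi_1 \equiv 0$ by smoothness. The $(\Im, \Im)$ case proceeds identically for $\psi_2$, starting from $\overline{\Delta_{(-3,2)}(\vp)}\psi_2 = 0$.

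The main technical obstacle is the overdetermined-system argument for the $(\Re, \Re)$ and $(\Im, \Im)$ cases: while $(\Im, \Re)$ and $(\Re, \Im)$ reduce algebraically to a $\cos 3\vp$-factor that forces vanishing on a dense set, the $(\Re, \Re)$ and $(\Im, \Im)$ cases require delicate separation into real and imaginary parts of a complex first-order operator. It is precisely at this step that the reality or imaginarity of $\psi_a$ is essentially used, producing an incompatibility (off a discrete locus) that smoothness then promotes to identical vanishing.
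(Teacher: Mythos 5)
Your overall strategy coincides with the paper's: substitute each reality type into \eqref{djcnkncjkrnckjrnk} and demand that the coefficients of the variations vanish, using the fact that the trigonometric prefactors have only isolated zeros. The computations you describe are sound as far as they go, but you have only argued the ``only if'' direction, and the missing half is not a formality. In the $(\Im,\Re)$ case, for instance, setting $\psi_1\equiv 0$ in $-4\cos 3\vp\,(\psi_2\,\dt_\Fc\psi_1+\psi_1\,\dt_\Fc\psi_2)$ still leaves $-4\cos 3\vp\,\psi_2\,\dt_\Fc\psi_1$, which has no reason to vanish for arbitrary $\dt_\Fc\psi_1$; read literally, your coefficient conditions force \emph{both} $\psi_1\equiv 0$ and $\psi_2\equiv 0$, a strictly stronger locus than the one in the statement, and they do not by themselves single out which one field is the characterisation. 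The paper closes exactly this gap with a brief but essential remark --- ``if $\psi_2$ vanishes identically, then so does $\dt_\Fc\psi_2$'' --- i.e.\ the variation of an identically vanishing component field is itself zero, which kills the surviving term and upgrades the single condition to an equivalence. Without some version of that step your argument establishes necessity of a pair of conditions rather than the stated ``if and only if''.

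A secondary point of comparison: the paper reads its condition off the coefficient of $\dt_\Fc\psi_1$, which it presents in \eqref{dcjnkdnckrncrcrd} and its $(\Im,\Re)$ analogue as the derivative-free quantities $-4\sinh(3i\vp)\,\psi_2$ and $-4\cosh(3i\vp)\,\psi_2$, so that $\psi_2\equiv 0$ follows immediately and no overdetermined first-order system ever appears. Your route through separating $\overline{\Delta_{(3,2)}(\vp)}\psi_1=0$ into real and imaginary parts and eliminating $\psi_1^\p$ is correct but attacks the harder coefficient, and it lands you on $\psi_1$ where the paper's computation lands on $\psi_2$. (There is an index tension in the source here: the paper's proof derives $\psi_2\equiv 0$ and then states the conclusion as $\e^*D_1\Phi=0$, so your agreeing with the proposition's wording is defensible; but you should be explicit about which coefficient you are using and why the other imposes no further constraint.)
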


\begin{proof}
We will prove this result in the case where $\Phi$ is of type $(\Re, \Re)$ or $(\Im, \Re)$, for when $\Phi$ is of the other types the proof is similar. So suppose firstly that $\Phi$ is of type $(\Re, \Re)$. Then the variation in \eqref{djcnkncjkrnckjrnk} becomes,
\begin{align}
\dt_\Fc (L_U - R_{UV,0}L_V)=  -4\sinh(3i\vp)\psi_2~\dt_\Fc\psi_1 - \left( 4\sinh(3i\vp)\psi_1 - 2i\sin (2\vp)\psi_1^\p\right)\dt_\Fc\psi_2.
\label{dcjnkdnckrncrcrd}
\end{align}
For $L$ to be consistent we require the right-hand side of \eqref{dcjnkdnckrncrcrd} to vanish. For this to happen, note that it is necessary for $\psi_2 \equiv 0$. But if $\psi_2$ vanishes identically, then so does $\dt_\Fc\psi_2$. Hence we see that $L$ will be consistent along $\Phi$ iff $\e^*D_1\Phi= 0$. 

In the case where $\Phi$ is of type $(\Im, \Re)$ the variation becomes,
\[
\dt_\Fc (L_U - R_{UV,0}L_V)=  -4\cosh(3i\vp)\psi_2~\dt_\Fc\psi_1 - \left( 4\cosh(3i\vp)\psi_1 + 2i\sin (2\vp)\psi_1^\p\right)\dt_\Fc\psi_2.
\]
We will reach the same conclusion by arguing just as in the case where $\Phi$ was of $(\Re, \Re)$-type. 
\end{proof}

\section{Concluding Remarks}

Upon contemplation on the objects arising in this paper, there are several natural questions to consider. Firstly, it should be observed that the supermanifolds appearing here are all of a particularly simple kind---that is, they are all \emph{split}. However, one of the key objects of interest, the superparticle Lagrangian $\Lc$, can be formulated on a supermanifold of any type---in particular on those of non-split type. Moreover, the consistency of the component Lagrangian $L$ did not \emph{a priori} require $\Lc$ to be global. As such, questions of consistency of $L$ may still be formulated on supermanifolds of non-split type. In doing so, one may perhaps gain insight into the nature of a non-split supermanifold itself. 
\\

One of the main objectives in this paper is to explore the relationship between the supersymmetric field theory defined on a supermanifold and the geometry of the supermanifold. For instance one could ask: \emph{which supermanifolds are best suited to the study of the field theory in question?} Should one be equipped with a particular supersymmetric Lagrangian, one can then attempt to answer this question by invoking methods introduced in this paper. In the case where the Lagrangian is that of the superparticle, we would answer that it is supermanifolds which admit, what we termed, a \emph{maximally superconformal structure}. It then becomes meaningful to study and classify those supermanifolds which admit such structures. Indeed, it is for this reason that we constrained our efforts in this paper to split supermanifolds, since it was shown that non-split supermanifolds could not admit maximal superconformal structures---at least in the $(1|2)$-dimensional case.
\\

We conclude with a remark here a possible generalisation. The superparticle discussed in this paper is considered with two supersymmetries---reflected by the odd-dimension of the supermanifold in question being \emph{two}. However, one need not be restricted to just two supersymmetries. Indeed, the local description of the superparticle Lagrangian $\Lc$ lends itself to a very logical generalisation to a superparticle Lagrangian with more supersymmetries, which in turn motivates a notion of superconformal structure in more generality. Given the stringent nature of the superconformal structure in dimension-$(1|2)$, a question of interest then becomes: \emph{how stringent is the superconformal structure in general?} For instance, supposing the superconformal structure has been generalised appropriately to $(p|q)$-dimensional supermanifolds, is there an analogue of Theorem \ref{chncnrcnrk}? Such a result may make the study of the superparticle with more-than-two supersymmetries on non-trivial supermanifolds quite tractable.

\appendix

\numberwithin{equation}{section}

\section{Proof of Theorem \ref{chncnrcnrk}}\label{ejnckenckrckrrc}

Let $\Ufr = \{\Uc, \Vc, \ldots\}$ denote an open covering of $\Xfr_{(M, E)}$ and let $\Ufr_\red = \{U, V, \ldots\}$ denote an open covering of $M$. Let $\rho = (\rho_{\Uc\Vc})$ be the transition functions of $\Xfr$ with respect to $\Ufr$ and denote by $\rho^+$ and $\rho^-$ the even and odd components. Explicitly, for coordinates $(X, \q^1,\q^2)$ on $\Uc$ and $(Y,\eta^1,\eta^2)$ on $\Vc$, we have
\begin{align}
Y \sim \rho^+_{\Uc\Vc}(X, \q) = f_{UV}(x) + \al_{UV}(x)~\q^1\wedge\q^2
&&
\mbox{and}
&&
\eta^a \sim \rho_{\Uc\Vc}^{-, a}(X, \q) = \zeta^a_{UV, b}(x)\q^b,
\label{cejhbjebchjeve}
\end{align}
where $a, b = 1, 2$; $\{f_{UV}\}$ denotes the transition functions of $M$; $\al_{UV}$ is a holomorphic function on $U\cap V$; and $\{\zeta_{UV}\} = \{(\zeta_{UV, b}^a)\}$ are the transition functions of the vector bundle $E$. 
\\

Suppose firstly that $\Xfr_{(M, E)}$ is maximally superconformal and let $g = \{(g^a_{\Uc\Vc,b})\}$ denote the defining cocycle for the superconformal structure. Then \eqref{jcnknckncjkck} holds for $(g^a_{\Uc\Vc,b})$ diagonal. The push-forward map on the superconformal vector fields may be computed explicitly using the transition functions in \eqref{cejhbjebchjeve}. On $D_{\Uc, 1}$ we have,
\begin{align}
(\rho_{\Uc\Vc})_*D_{\Uc, 1} = \left( \zeta_{UV, 1}^b - \frac{\pt \zeta_{UV, a}^b}{\pt x}\zeta_{VU, c}^2\zeta_{VU, d}^a~\eta^{cd}
\right)\frac{\pt}{\pt \eta^b}
-
\left(
\frac{\pt f_{UV}}{\pt x} - \al_{UV}
\right)\zeta_{VU, c}^2
\eta^c\frac{\pt}{\pt y}
\label{jcnkncjkrcnkrn}
\end{align}
where we have set $\eta^{cd} \equiv \eta^c\wedge\eta^d$ and the Einstein summation convention is being used.  The expression for $D_{\Uc, 2}$ is similar but contains some changes of sign. We find,
\begin{align}
(\rho_{\Uc\Vc})_*D_{\Uc, 2} =
\left( \zeta_{UV, 2}^b - \frac{\pt \zeta_{UV, a}^b}{\pt x}\zeta_{VU, c}^1\zeta_{VU, d}^a~\eta^{cd}
\right)\frac{\pt}{\pt \eta^b}
-
\left(
\frac{\pt f_{UV}}{\pt x} + \al_{UV}
\right)\zeta_{VU, c}^1
\eta^c\frac{\pt}{\pt y}.
\label{rjncknkcrnece}
\end{align}
Now since $\{g_{\Uc\Vc}\}$ is diagonal, it follows from \eqref{jcnknckncjkck} that $\zeta_{UV, 1}^2$ and $\zeta_{UV, 2}^1$ vanish. Hence $E$ must split into a sum of line bundles $\ell_1 \oplus \ell_2$, defined by transition functions $\zeta_{UV, 1}^1$ and $\zeta_{UV, 2}^2$.  Now set $g_{\Uc\Vc, a} = (g_{\Uc\Vc, a}^a)$. We have,
\begin{align}
g_{\Uc\Vc, a} = \zeta_{UV, a}^a + (-1)^{a-1}\det\zeta_{VU}\frac{\pt\zeta_{UV, a}^a}{\pt x}~\eta^{12},
\label{rvjnkrnckrckjr}
\end{align}
where $\det \zeta_{VU} = \zeta_{VU, 1}^1\zeta_{VU, 2}^2$.

\begin{REM}
\emph{
At this stage it is not obvious that the right-hand side of the expression for $g_{\Uc\Vc}$ in \eqref{rvjnkrnckrckjr} will satisfy the cocycle condition. However, as we assumed that $\Xfr$ is maximally superconformal, then $\{g_{\Uc\Vc}\}$ is assumed to be a 1-cocycle \emph{a priori}. In the converse statement of this theorem we will explore the right-hand side of \eqref{rvjnkrnckrckjr} in more detail.
}
\end{REM}

We consider now the $\pt/\pt y$-component of \eqref{jcnkncjkrcnkrn} and \eqref{rjncknkcrnece}. By maximal superconformality,
\begin{align*}
\eta^2 g_{\Uc\Vc, 1} = \left( \frac{\pt f_{UV}}{\pt x} - \al_{UV}\right) \zeta_{VU, 2}\eta^2
&&
\mbox{and}&&
\eta^1 g_{\Uc\Vc, 2} =\left( \frac{\pt f_{UV}}{\pt x} + \al_{UV}\right) \zeta_{VU, 1}^1\eta^1.
\end{align*}
Then from \eqref{rvjnkrnckrckjr} it follows that one can solve for $\al_{UV}$ in two different ways as follows,
\begin{align}
\al_{UV} = \det \zeta_{UV} - \frac{\pt f_{UV}}{\pt x} = - \left( \det \zeta_{UV} -  \frac{\pt f_{UV}}{\pt x}\right). 
\label{jcnknckrcnkrnckr}
\end{align}
This shows $\al_{UV} = -\al_{UV}$ which can hold iff $\al_{UV} = 0$. Hence we have shown that $\Xfr_{(M, E)}$ is just the split model $\Pi E$. 

Finally, that $\det E = TM$ follows from \eqref{jcnknckrcnkrnckr} and the observation just made that $\al_{UV}=0$. 
\\

As to the converse implication, suppose $E = \ell_1\oplus \ell_2$, where $\ell_a$, $a = 1, 2$, is defined by $\{\zeta_{UV, a}^a\}$ so that $\{\zeta_{UV}\} := \{\zeta_{UV, 1}^1\oplus \zeta_{UV, 2}^2\}$ defines $E$. Now write $g_{\Uc\Vc, a}$ in terms of these transition functions as in \eqref{rvjnkrnckrckjr}. Then \emph{if} $\{g_{\Uc\Vc, a}\}$ satisfies the cocycle condition, it will follow that $\Pi E$ is maximally superconformal. Thus it remains to show that $\{g_{\Uc\Vc}\}$ satisfies the cocycle condition iff $\det E = TM$. This is now a simple computation. Recall that the cocycle condition requires,
\begin{align}
g_{\Uc\Vc, a}g_{\Vc\mathcal W, a} = g_{\Uc\Wc, a}.
\label{cjkncknrkcjrjkc}
\end{align}
We have here,
\begin{align*}
g_{\Uc\Vc, a} &= \zeta_{UV, a}^a + (-1)^{a-1}\det\zeta_{VU}\frac{\pt\zeta_{UV, a}^a}{\pt x}~\eta^{12}\\
g_{\Vc\Wc, a} &= \zeta_{VW, a}^a + (-1)^{a-1}\det\zeta_{WV}\frac{\pt\zeta_{VW, a}^a}{\pt y}~\gam^{12}\\
g_{\Uc\Wc, a} &= \zeta_{UW, a}^a + (-1)^{a-1}\det\zeta_{WU}\frac{\pt\zeta_{UW, a}^a}{\pt x}~\gam^{12}
\end{align*}
where $(\gam^1, \gam^2)$ denote the odd coordinates on $\Wc\in \Ufr$ and $\eta^{12} = \det \zeta_{WV}~\gam^{12}$. Imposing \eqref{cjkncknrkcjrjkc}, we find that the following must be satisfied
\begin{align}
\det \zeta_{WU}\frac{\pt\zeta_{UW, a}^a}{\pt x} &\stackrel{\mathrm{set}}{=}  \det\zeta_{VU}\det \zeta_{WV}~\frac{\pt\zeta_{UV, a}^a}{\pt x}\zeta_{VW, a}^a
+
 \det\zeta_{WV}~\zeta_{UV, a}^a\frac{\pt\zeta_{VW, a}^a}{\pt y} 
 \label{jnckncknrcrck}
  \\
 &= 
 \det\zeta_{WU}~\frac{\pt\zeta_{UV, a}^a}{\pt x}\zeta_{VW, a}^a
+
 \det\zeta_{WV}\frac{\pt f_{VU}}{\pt y}~\zeta_{UV, a}^a\frac{\pt\zeta_{VW, a}^a}{\pt x}.
 \label{fjcnkncjkrnckjr}
\end{align}
Now since $\zeta_{UW, a}^a = \zeta_{UV, a}^a \zeta_{VW, a}^a$, we see that the left-hand side of \eqref{jnckncknrcrck} becomes
\[
\det \zeta_{WU}\frac{\pt\zeta_{UW, a}^a}{\pt x} = \det \zeta_{WU}~\frac{\pt\zeta_{UV, a}^a}{\pt x} \zeta_{VW, a}^a+ \det \zeta_{WU}~\zeta_{UV, a}^a\frac{\pt\zeta_{VW, a}^a}{\pt x}.
\]
In comparing the above expression with \eqref{fjcnkncjkrnckjr} we see that \eqref{jnckncknrcrck} can hold iff 
\[
\det \zeta_{WU} =  \det\zeta_{WV}\frac{\pt f_{VU}}{\pt y} \iff  \frac{\pt f_{VU}}{\pt y}  = \det \zeta_{VU}.
\]
Thus we arrive at the condition $\det E = TM$. This completes the proof.

\hfill
\\
\noindent
\small
\textsc{
Kowshik Bettadapura, 
Mathematical Sciences Institute, Australian National University, Canberra, ACT 2601, Australia}
\\
\emph{E-mail address:} \href{mailto:kowshik.bettadapura@anu.edu.au}{kowshik.bettadapura@anu.edu.au}

\end{document}